\newtheorem{definition}{Definition}[section]
\newtheorem{theorem}{Theorem}[section]
\theoremstyle{remark}
\newtheorem{case}{Case}
\title{A Bivariate DAR($1$) model for ordinal time series}
\date{}
\author{Anna Nalpantidi
  and  Dimitris Karlis}
\affil{ Department of Statistics, \\ Athens University of Economics and Business,  Greece}
\begin{document}

\maketitle

\begin{abstract}
We present a bivariate vector valued discrete autoregressive model of order $1$ (BDAR($1$)) for discrete time series. The BDAR($1$) model assumes that each time series follows its own univariate DAR($1$) model with dependent random mechanisms that determine from which component the current status occurs and dependent innovations. The joint distribution of the random mechanisms which are expressed by Bernoulli vectors are proposed to be defined through copulas. The same holds for the joint distribution of innovation terms. Properties of the model are provided, while special focus is given to the case of bivariate ordinal time series. A simulation study is presented, indicating that model provides efficient estimates even in case of moderate sample size. Finally, a real data application on unemployment state of two countries is presented, for illustrating the proposed model. 
\end{abstract}

{\emph Keywords: Pegram's operator, ordinal data, discrete autoregressive model}

\section{Introduction} 

An ordinal time series is a sequence of observations $(Z_t)_{t\in \mathcal{Z}}$ with
$Z_t \in \mathcal{S}=\{s_1,s_2,\ldots,s_d\}$, where $s_1<s_2<\ldots<s_d$,
that evolves over time. This type of data is common across various fields, including environmental studies (\cite{gottlein1992ordinal}, \cite{liu2022modeling2}, \cite{liu2022modeling1}, \cite{jahn2024nonlinear}), sports \citep{fokianos2003regression}, healthcare \citep{fokianos2003regression}, and economics \citep{weiss2019distance}. For instance, \cite{liu2022modeling2} examined air quality in three major Chinese cities with the aim of managing air pollution. For each city, the daily air quality index was recorded on a six-level ordinal scale: (1) excellent, (2) good, (3) slightly polluted, (4) moderately polluted, (5) heavily polluted, and (6) severely polluted. Clearly, there is a natural ordering among these categories. In another application, \cite{weiss2019distance} studied credit ratings—an assessment of the ability of a debtor (e.g., individual, government, or corporation) to repay debt—for EU countries, based on Standard \& Poor (S\&P) ratings. The dataset covers the period from January 2000 to December 2017, with each country assigned one of 23 possible ratings ranging from "D" (worst) to "AAA" (best). Intermediate ratings between “CCC” and “AA” are further refined using plus or minus signs.

In some cases, even when continuous values are recorded, they may be considered unreliable or imprecise. In such situations, transforming them into ordinal categories can improve robustness, as in the case of EEG data \citep{keller2007ordinal}.

This type of time series requires special methodological care, since models must respect both the discrete nature of the data and the natural ordering of states. In a series of works \cite{jacobs1978discrete}, \cite{jacobs1978discrete2}, and \cite{jacobs1978discrete3} introduced discrete versions of ARMA models (DARMA), including a special case of the AR model (DAR), which accounts only for non-negative dependence. Later, \cite{jacobs1983stationary} proposed the new DARMA model (NDARMA), offering a simpler and more intuitive dependence structure, closer in spirit to the familiar ARMA model.

The main idea of the NDARMA($p,q$) model is that each current value is generated as a random choice among the $p$ past values of the series, the $q$ past innovation terms, and a new innovation variable. This random mixture is typically represented by a multinomial vector. The innovation process itself is assumed to be discrete with the same state space $\mathcal{S}$, ensuring that the time series always takes values within the correct support.

NDARMA models are appropriate for any type of discrete valued time series namely count, ordinal, nominal and binary as a special case. Following the approach of \cite{weiss2008measuring}, we can  define the NDARMA(p,q) model as follows:

\begin{definition}\label{de1}
{\bf The NDARMA($p,q$) model:} 
Let the observations $(Z_t)_{t\in \mathcal{Z}}$ and innovations $(\epsilon_t)_{t\in\mathcal{Z}}$ be discrete valued processes with state space $\mathcal{S}=\{s_1,\ldots,s_d\}$. Innovation process $(\epsilon_t)_{t\in \mathcal{Z}}$ is i.i.d. with marginal distribution $\boldsymbol{p}=(p_{s1},\ldots,p_{s_d})$, where $p_{s_i}=P(\epsilon_t=s_i)$ and they are assumed independent of $(Z_s)_{s<t}$. To obtain the random mechanism which chooses the current value of the process $Z_t$, we consider the i.i.d. multinomial vectors
\begin{equation*}
(\alpha_{1t},\ldots,\alpha_{pt},\beta_{0t},\beta_{1t},\ldots,\beta_{qt})\sim \text{Multinomial}(1;\phi_1,\ldots,\phi_p,\psi_0,\ldots,\psi_q)
\end{equation*}
which are independent of $(\epsilon_t)_{t\in \mathcal{Z}}$ and $(Z_s)_{s<t}$. Then, $(Z_t)_{t\in \mathcal{Z}}$ is said to be a NDARMA(p,q) process, if it follows the recursion: 
\begin{equation}
  Z_t=\alpha_{1t} Z_{t-1}+\ldots+ \alpha_{pt} Z_{t-p}+\beta_{0t} \epsilon_t+\beta_{1t} \epsilon_{t-1} +\ldots+\beta_{qt} \epsilon_{t-q}. 
\end{equation}
\end{definition}
For $q=0$ we have the special case of DAR(p) model and for $p=0$ we have the special case of DMA(q) model. 

The idea of using random mixture  to adapt ARMA dependence structure in discrete time series can also be achieved through Pegram's operator. More specifically, \cite{pegram1980autoregressive}
proposed a class of discrete-valued AR(p) processes. Pegram’s operator $*$ is a mixing operator which mixes two or more random variables. 

\begin{definition}\label{de2}
{\bf Pegram's operator: } 
For a series of $m$ independent discrete random variables $U_i$, $i=1,\ldots,m$, 
Pegram's operator mixes the $U_i$ 
with probabilities $\phi_i$ with $\sum \phi_i =1$, denoted as
\begin{equation*}
Z= (\phi_1,U_1)*\ldots*(\phi_m, U_m),
\end{equation*}
with the corresponding marginal probability function to be
\begin{equation*}
P(Z=j) = \sum\limits_{i=1}^m \phi_i P(U_i=j).  
\end{equation*}
\end{definition}

Based on this mixing operator, \cite{pegram1980autoregressive}
created a class of stationary AR($p$) process, 
denoted as PAR($p$) process. It has been shown that the PAR($p$) process is 
equivalent to the DAR(p) process of \cite{jacobs1978discrete} defined in a similar manner. \cite{biswas2009discrete} extended this PAR($p$) process to PARMA(p, q) 
introducing moving average (MA) terms. This model is equivalent to the
NDARMA process of \cite{jacobs1983stationary}.
See also \cite{angers2017bayesian} for an application of such models.
Pegram's operator has been also used to define count data time series in 
\cite{khoo2017modelling}.

In the context of times  series $Z_t$ an autoregressive model of the form
\begin{equation*}
Z_t = (\phi,Z_{t-1})*(1-\phi,\epsilon_t) ,
\end{equation*}
implies that the current value at time $t$ is
either that of the previous observation at time $t-1$, with probability $\phi$ or a new one coming from some innovation random variable $\epsilon_t$, with probability $1-\phi$. In fact the above defines an autoregressive of order 1 model. Pegram's operator can be used to define  a very rich family of time series models. 

Extensions of NDARMA models have been proposed in various directions. In particular, \cite{moller2020generalized} introduced the Generalized DARMA (GDARMA) model, which addresses a key limitation of the standard NDARMA framework. Specifically, the random mechanism underlying NDARMA models often produces long runs of repeated values, making them unsuitable for many practical applications. The GDARMA model overcomes this issue by incorporating data-specific variation operations. This innovation not only resolves the problem of long runs but also enhances the model’s flexibility, allowing it to accommodate a wide range of quantitative time series.

Moreover, for the special case of binary time series, \cite{jentsch2019generalized} proposed the Generalized Binary Autoregressive (gbAR) model, which is capable of capturing both positive and negative dependence. This is possible in the binary setting, since negative dependence can naturally be interpreted as a tendency toward the opposite state. The authors later extended the framework to the multivariate case \citep{jentsch2022generalized}, incorporating cross-correlation terms across series as well as covariance among the current innovation terms.

On the other hand, the literature on bivariate models for ordinal time series is relatively underdeveloped. Despite the growing interest in modeling multiple outcomes, the lack of well-established and easily applicable multivariate distributions for ordinal data presents a significant challenge. Moreover, two types of dependence must be accounted for: serial correlation within each series and cross-correlation between the two series.

In the context of bivariate longitudinal ordinal data, \cite{todem2007latent} proposed a latent-variable framework. Their approach assumes that each observed ordinal outcome is driven by an underlying continuous latent variable, which is modeled using a linear mixed model. Serial correlation within each outcome is captured through random effects, while cross-correlation is modeled by assuming a joint Gaussian distribution for both the error terms and the random effects of the two latent variables. In a different approach, \cite{lee2013flexible} proposed modeling bivariate ordinal longitudinal data using marginalized models. Specifically, the marginal mean of each outcome is linked to a set of covariates and estimated via a cumulative logit model. For the correlation structure, a cumulative logit model with random effects is employed. To capture cross-correlation between the two outcomes, both within and across time points, the authors introduced a Kronecker product structure for the covariance matrix of the random effects. From a different perspective, \cite{nikoloulopoulos2019coupling} presented a copula-based bivariate panel ordinal model. 

The contribution of the present paper is as follows. Addressing the lack of bivariate models for ordinal time series, we introduce a bivariate DAR($1$) model for this setting. In our approach, each time series is modeled as a DAR($1$) process with a specific random mixture vector that captures non-negative serial dependence. The correlation between the two series is represented through two components. First, although each series follows a distinct random mechanism, the mechanisms themselves are assumed to be correlated, requiring the specification of a joint distribution for the two Bernoulli random variables. Second, we allow the innovation terms of the two series to be correlated, which requires defining a joint distribution for multinomial random vectors. Since specifying these bivariate distributions directly is challenging, we propose the use of copula functions, which provide a flexible framework to model joint distributions and select an appropriate dependence structure based on the data. This mechanism is expected to effectively capture the cross-dependence between the two series. Importantly, we avoid introducing cross-correlation terms within each DAR($1$) model, as differences in the state spaces of the two series could lead to inconsistent or invalid values.

The paper is organized as follows. In \hyperref[Sec2]{Section 2}, we review the DAR($1$) model for univariate discrete time series and extend it to the bivariate case, introducing the BDAR($1$) model for discrete, and in particular, ordinal time series. The properties of the BDAR($1$) model are presented in \hyperref[Sec3]{Section 3}. In \hyperref[Sec4]{Section 4}, we discuss the estimation procedure, while \hyperref[Sec5]{Section 5} presents a simulation study to evaluate the model’s performance under varying sample sizes. Based on the proposed model, \hyperref[Sec6]{Section 6} provides a joint analysis of the unemployment rates in Slovakia and the Czech Republic. Finally, \hyperref[Sec7]{Section 7} summarizes the main findings and outlines potential directions for future research.

\section{Definition of BDAR (1) model}
\label{Sec2}
\subsection{DAR($1$) model}
According to \cite{weiss2008measuring}, DAR($1$) is defined as follows:

\begin{definition}\label{de3}
{\bf The DAR($1$) model:} 
Let observations $(Z_t)_{t\in \mathcal{Z}}$ and the innovation terms $\epsilon_{t}$ be discrete processes with state space $\mathcal{S}$. Innovation $\epsilon_{t}$ is i.i.d. with marginal distribution $\boldsymbol{p}_{\epsilon}$,$(P(\epsilon_t=i)=p_{\epsilon_i}, i\in \mathcal{S})$ and they are assumed to be independent of $(Z_s)_{s<t}$. The random mixture is obtained through the i.i.d. Bernoulli random variables:
\begin{equation}
  \alpha_t \sim Bernoulli(\phi),\hspace{1cm} \mbox{for}~ t\in \mathcal{Z}.  
\end{equation}
Then $(Z_t)_{t\in\mathcal{Z}}$ is said to be a DAR($1$) process if it follows the recursion:
\begin{equation}
  Z_t=\alpha_t Z_{t-1}+(1-\alpha_t)\epsilon_t. 
\label{Eq:DAR($1$)}
\end{equation}
where $\alpha_t$ is independent to $\epsilon_t$ and $(Z_s)_{s<t}$.
\end{definition}

The model in \ref{Eq:DAR($1$)} implies that $Z_t$ will be $Z_{t-1}$ with probability $\phi$ or $\epsilon_t$ with probability $1-\phi$. DAR($1$) model is a Markov chain of order $1$. It is a stationary process for $0\leq\phi<1$ with marginal distribution the same as innovation term $\epsilon_t$. It also has the correlation structure of an AR(1) model, while we can derive the set of Yule-Walker equation, to notice that all correlations are always greater or equal to zero.

\subsection{BDAR($1$) model for discrete time series}

At this point we extend the DAR($1$) model to the bivariate setting, introducing the BDAR($1$) model. Consider two discrete-valued time series, which may be of different types (count, ordinal, or binary). The two series evolve jointly over time, with each process influencing the other. Individually, each series follows a DAR($1$) process. To capture their cross-dependence, we assume that the random mixtures governing the two processes are associated, and we model their joint distribution through a suitable copula function. In addition, we allow the innovation terms of the two series to be correlated, again using a copula function to specify their joint distribution. In this framework, the state of each time series depends not only on its own past but also on the evolution of the other process. This interaction arises because both the Bernoulli random variables driving the random mixtures and the innovation terms are correlated across the two series.

The idea of using copulas is based on their advantage to allow defining multivariate distributions easily \citep{nelsen2006introduction}. In addition, they are also flexible in the way that allow for a great variety of dependence structure, choosing the appropriate copula. An important note is that, in case of continuous margins the copula is unique. However, according to Sklar's theorem, in case of discrete margins the copula is not unique. However, the distribution is still valid and plenty of examples using copula to define  multivariate distributions for discrete variables can be found in the literature \citep{nikoloulopoulos2008multivariate, panagiotelis2012pair,nikoloulopoulos2019coupling}. 

\begin{definition}\label{de4}
{\bf The BDAR($1$) model:} 
Let $(\boldsymbol{Z}_t)_{t\in\mathcal{Z}}=(Z_{1t},Z_{2t})^{'}_{t\in\mathcal{Z}}$ be a $2$-dimensional observed discrete time series with state space $\mathcal{S}_1 \times \mathcal{S}_2$ and $(\boldsymbol{\epsilon}_t)_{t \in \mathcal{Z}}=(\epsilon_{1t},\epsilon_{2t})^{'}_{t\in\mathcal{Z}}$ 
be an i.i.d. $2$-dimensional discrete innovation processes, with mean value $\boldsymbol{\mu}_\epsilon=(\mu_{\epsilon_1},\mu_{\epsilon_2})^{'}$ and covariance matrix $\Sigma_\epsilon$, such that $\boldsymbol{\epsilon}_t$ is independent of $(\boldsymbol{Z}_s)_{s<t}$. In addition, we assume that $\epsilon_{1t}$ and $\epsilon_{2t}$ have marginal distributions $\boldsymbol{p_\epsilon^{(1)}}, (P(\epsilon_{1t}=i)=p^{(1)}_{\epsilon_i}, i\in\mathcal{S}_1)$ and $\boldsymbol{p_\epsilon^{(2)}}, (P(\epsilon_{2t}=i)=p^{(2)}_{\epsilon_i},i\in \mathcal{S}_2)$, respectively, and  cdfs denoted as $F_{\epsilon_{1t}}(\cdot)$  and
$F_{\epsilon_{2t}}(\cdot)$. 
Their joint cumulative distribution function is given by a copula function $C(\cdot,\cdot)$ with dependence parameter $\delta_{\epsilon}$. Then, the joint probability mass function (pmf) is given using:
\begin{eqnarray*}
P(\epsilon_{1t},\epsilon_{2t})&=&C(F_{\epsilon_{1t}}(\epsilon_{1t}),F_{\epsilon_{2t}}(\epsilon_{2t});\delta_{\epsilon})-C(F_{\epsilon_{1t}}(\epsilon_{1t}-1),F_{\epsilon_{2t}}(\epsilon_{2t});\delta_{\epsilon})\\
    &-&C(F_{\epsilon_{1t}}(\epsilon_{1t}),F_{\epsilon_{2t}}(\epsilon_{2t}-1);\delta_{\epsilon})+
    C(F_{\epsilon_{1t}}(\epsilon_{1t}-1),F_{\epsilon_{2t}}(\epsilon_{2t}-1);\delta_{\epsilon}).
\end{eqnarray*}
We denote the joint pmf as $p_{\epsilon_{ij}}=P(\epsilon_{1t}=i,\epsilon_{2t}=j)$. 
To obtain the random mechanism which randomly selects between past value $\boldsymbol{Z}_{t-1}$ and the innovation $\boldsymbol{\epsilon}_t$, we consider i.i.d. Bernoulli random variables. The random mechanism is assumed to be separate for each time series. We assume different Bernoulli random variables for each time series 
but the Bernoulli random variables are assumed to be dependent. Their joint cumulative distribution is also defined through a copula function $C(\cdot,\cdot)$ with dependent parameter $\delta_\alpha$. 
\begin{eqnarray*}
\alpha_{1t} &\sim& Bernoulli(\phi_1)\\ 
\alpha_{2t} &\sim& Bernoulli(\phi_2)
\end{eqnarray*}
with joint probability mass function
\begin{eqnarray*}
P(\alpha_{1t},\alpha_{2t})&=&C(F_{\alpha_{1t}}(\alpha_{1t}),F_{\alpha_{2t}}(\alpha_{2t});\delta_{\alpha})-C(F_{\alpha_{1t}}(\alpha_{1t}-1),F_{\alpha_{2t}}(\alpha_{2t});\delta_{\alpha})\\
    &-&C(F_{\alpha_{1t}}(\alpha_{1t}),F_{\alpha_{2t}}(\alpha_{2t}-1);\delta_{\alpha})+
    C(F_{\alpha_{1t}}(\alpha_{1t}-1),F_{\alpha_{2t}}(\alpha_{2t}-1);\delta_{\alpha}),
\end{eqnarray*}
where $F_{\alpha_{1t}}(\cdot)$ and
$F_{\alpha_{2t}}(\cdot)$
are the cdfs of the Bernoulli random variables. 
We denote $\pi_{ij}=P(a_{1t}=i,a_{2t}=j)$. Then, we have four possible outcomes with probabilities:
\begin{align*}
&\pi_{11}=P(\alpha_{1t}=1,\alpha_{2t}=1), \quad \text{then} \quad \boldsymbol{Z}_t=\boldsymbol{Z}_{t-1},\\
&\pi_{10}=P(\alpha_{1t}=1,\alpha_{2t}=0), \quad \text{then} \quad (Z_{1t},Z_{2t})^{'}=(Z_{1,t-1},\epsilon_{2t})^{'},\\
&\pi_{01}=P(a_{1t}=0,\alpha_{2t}=1), \quad \text{then} \quad (Z_{1t},Z_{2t})^{'}=(\epsilon_{1t},Z_{2,t-1})^{'},\\
&\pi_{00}=P(\alpha_{1t}=0,\alpha_{2t}=0), \quad \text{then} \quad \boldsymbol{Z}_t=\boldsymbol{\epsilon}_{t}.
\end{align*}
$\boldsymbol{\alpha}_{t}$ are assumed to be independent of $\boldsymbol{\epsilon}_t$ and of $(\boldsymbol{Z}_s)_{s<t}$. 

$\boldsymbol{Z}_t$ is said to be a BDAR($1$) process if it follows the recursion:
\begin{align}
\begin{bmatrix} Z_{1t} \\ Z_{2t} \end{bmatrix} 
 &=\begin{bmatrix} \alpha_{1t} \\ \alpha_{2t} \end{bmatrix} \odot \begin{bmatrix} Z_{1,t-1} \\ Z_{2,t-1} \end{bmatrix}+
\begin{bmatrix} 1-\alpha_{1t} \\ 1-\alpha_{2t} \end{bmatrix} \odot \begin{bmatrix} \epsilon_{1t} \\ \epsilon_{2t} \end{bmatrix} \nonumber \\
&\text{or equivalent} \nonumber \\
\boldsymbol{Z}_t&=\boldsymbol{\alpha}_t \odot \boldsymbol{Z}_{t-1} + \boldsymbol{\beta}_t \odot \boldsymbol{\epsilon_t}
\label{Eq:BDAR($1$)}
\end{align}
where $\boldsymbol{\alpha}_t=(\alpha_{1t},\alpha_{2t})^{'}$, $\boldsymbol{\beta}_t=1-\boldsymbol{\alpha}_t$ and $\odot$ the Hadamard (element wise) product.
\end{definition}

\begin{case}\label{case1}
Although our primary focus is on ordinal time series, the model described in \ref{Eq:BDAR($1$)} can be applied more generally to the joint modeling of two discrete-valued time series, whether count, ordinal, or binary. This means that the methodology remains valid even in the case of mixed-type series. We exclude, however, the case of nominal time series. The reason is that the joint distribution of the innovations is defined via a copula function, which requires specification of the marginal cumulative distribution functions. Such specification suppose in advance an ordering of the possible states. For nominal time series, no natural ordering exists, and imposing an arbitrary order could distort the results.
\end{case}

\begin{case}\label{case2}
We assume that all dependence is captured through the joint distribution of the innovations and the joint distribution of the random mixtures. By doing so, we avoid the use of explicit cross-correlation terms, which allows the model to remain as general as possible. In particular, when dealing with two ordinal time series with different state spaces, the introduction of a cross-correlation term could generate non-plausible values. In contrast, in the work of \cite{jentsch2019generalized}, the model is restricted to two binary time series, where both processes take values in $\{0,1\}$. In that setting, cross-correlation is well defined and meaningful.
\end{case}

\begin{case}\label{case3}
In the above definition we assumed a non-parametric pmf for the innovations. In order to achieve parsimony one may assume a parametric model, say e.g. a shifted Binomial distribution, that fully determines the probabilities for each state with much fewer parameters \citep[see,e.g.][]{weiss2019distance}.
\end{case}

With respect to continuous time series, we expect that the methodology remains applicable, provided that a DARMA-class model can be adapted to the continuous setting. The current definition implies that two consecutive observations may take exactly the same value. While this is natural for discrete data, it may seem counterintuitive for continuous models unless interpreted as a form of persistence in the observed values. An alternative approach is to build on the results of \cite{moller2020generalized}, where an appropriate variation operation is introduced to model continuous time series. Following this idea, one could extend the proposed framework to allow at least one of the series to follow a GDARMA model.

\subsection{BDAR($1$) model for ordinal time series}

The main focus of this work is to present a bivariate model for ordinal time series. Thus, at this point we present the proposed methodology up to this special case. 

Let $\boldsymbol{Z}_t$ be
the observed bivariate ordinal processes where $Z_{1t}$ has state space $\mathcal{S}_1=\{s_1,\ldots,s_{d_1}\}$, where $s_1<\ldots<s_{d_1}$ and $Z_{2t}$ has state space $\mathcal{S}_2=\{s_1,\ldots,s_{d_2}\}$, where $s_1<\ldots<s_{d_2}$. The innovation term $\boldsymbol{\epsilon}_t$ is a also a bivariate ordinal process with state spaces $\mathcal{S}_1\times \mathcal{S}_2$. The marginal distributions of individual innovation terms are  $\boldsymbol{p}^{(1)}_{\epsilon}=(p^{(1)}_{\epsilon_{s_1}},\ldots,p^{(1)}_{\epsilon_{s_{d_1}}})^{'}$ and $\boldsymbol{p}^{(2)}_{\epsilon}=(p^{(2)}_{\epsilon_{s_1}},\ldots,p^{(2)}_{\epsilon_{s_{d_2}}})^{'}$ and especially it holds that:  
\begin{eqnarray*}
&\epsilon_{1t} \sim \boldsymbol{p}^{(1)}_{\epsilon},\\ 
&\epsilon_{2t} \sim \boldsymbol{p}^{(2)}_{\epsilon}.
 \end{eqnarray*}
Their joint distribution is defined through an appropriate copula function $C(\cdot,\cdot)$ with dependence parameter $\delta_{\epsilon}$. As random mixtures are concerned, we assume that we have two random variables $\boldsymbol{\alpha}_t=(\alpha_{1t},\alpha_{2t})^{'}$ following marginally a Bernoulli distribution with parameters $\phi_1$ and $\phi_2$ respectively. Their joint distribution is also given by a copula function $C(\cdot,\cdot)$, with dependence parameter $\delta_\alpha$.

\section{Properties}
\label{Sec3}
At this point, we would like to examine the properties of the BDAR($1$) model. We are interested in studying the stationarity conditions, marginal distributions, joint marginal and joint conditional distribution of the two series and the cross-covariance and cross-correlation matrices. We consider the model as defined in Definition \ref{de4}.

\subsection{Stationarity}

\begin{theorem}\label{th1}
 The series is stationary if  $0\leq\phi_1,\phi_2<1$.
\end{theorem}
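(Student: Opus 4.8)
The plan is to prove strict stationarity — which in particular yields the weak stationarity and the AR($1$)-type correlation structure noted for the univariate case — by exhibiting $\boldsymbol{Z}_t$ as an explicit, time-shift-equivariant functional of the i.i.d.\ driving sequences $(\boldsymbol{\alpha}_s)_{s\in\mathcal{Z}}$ and $(\boldsymbol{\epsilon}_s)_{s\in\mathcal{Z}}$. Since each coordinate of the defining recursion (Definition \ref{de4}) is a univariate DAR($1$) recursion, $Z_{it}=\alpha_{it}Z_{i,t-1}+(1-\alpha_{it})\epsilon_{it}$ for $i=1,2$, I would first treat the two coordinates separately and only afterwards combine them. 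The copula-induced dependence between $\alpha_{1t},\alpha_{2t}$ and between $\epsilon_{1t},\epsilon_{2t}$ plays no role in the argument beyond the fact that these pairs remain i.i.d.\ in $t$.

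First I would unfold each coordinate recursion backward. Define the backward hitting time
\begin{equation*}
N_{it}=\min\{k\ge 0:\alpha_{i,t-k}=0\},\qquad i=1,2.
\end{equation*}
Substituting repeatedly into $Z_{it}=\alpha_{it}Z_{i,t-1}+(1-\alpha_{it})\epsilon_{it}$ shows that the products of consecutive $\alpha$'s telescope, so that all innovation terms but one are annihilated and, on the event $\{N_{it}<\infty\}$, one is left with the clean representation $Z_{it}=\epsilon_{i,\,t-N_{it}}$. Intuitively, $Z_{it}$ equals the most recent innovation of series $i$ that was actually accepted by its own random mechanism.

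Next I would establish that this representation holds almost surely precisely when $\phi_i<1$. Because $(\alpha_{is})_s$ are i.i.d.\ $\mathrm{Bernoulli}(\phi_i)$, we have $P(N_{it}\ge k)=\phi_i^{\,k}$, hence $P(N_{it}=\infty)=\lim_{k\to\infty}\phi_i^{\,k}=0$ iff $0\le\phi_i<1$; in that range $N_{it}$ is an a.s.\ finite (geometric) random variable and the infinite backward substitution converges. If instead $\phi_i=1$, then $\alpha_{is}=1$ for all $s$ almost surely, $N_{it}=\infty$, the substitution never terminates, and the doubly-infinite recursion fails to determine a proper law on $\mathcal{Z}$ — which is exactly why the stated range excludes $\phi_i=1$.

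Finally I would conclude stationarity. The displays above exhibit the whole process as $\boldsymbol{Z}_t=\Psi\big((\boldsymbol{\alpha}_s,\boldsymbol{\epsilon}_s)_{s\le t}\big)$ for a fixed measurable map $\Psi$ that does not depend on $t$, i.e.\ $\boldsymbol{Z}$ is obtained by applying one and the same functional along the i.i.d.\ (hence strictly stationary) bivariate sequence $(\boldsymbol{\alpha}_s,\boldsymbol{\epsilon}_s)_s$. Since $\Psi$ commutes with the time shift, the finite-dimensional distributions of $(\boldsymbol{Z}_t)_t$ are translation invariant, which is strict stationarity. The main obstacle I anticipate is purely technical: making the infinite backward substitution rigorous (controlling the null event $\{N_{it}=\infty\}$ and verifying measurability of $\Psi$ on the doubly-infinite index set), after which the joint dependence between the two series is automatically consistent across time and the stationarity statement follows with no further computation.
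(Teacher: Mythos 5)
Your proof is correct, and while it shares the paper's basic backbone --- unfolding the recursion backwards in time --- it differs in two substantive ways that are worth noting. First, the paper works with the joint vector recursion and argues that the Hadamard product $\prod_{d=0}^{m}\boldsymbol{\alpha}_{t-d}$ vanishes because the event $\{\boldsymbol{\alpha}_{t-d}=\boldsymbol{0}\}$ has probability $\pi_{00}=C(1-\phi_1,1-\phi_2;\delta_\alpha)>0$; this asks for a \emph{simultaneous} zero in both coordinates, which is sufficient but not necessary, and moreover its positivity depends on the copula (it holds for the Gumbel and Frank families used in the paper, but a copula near the Fr\'echet lower bound can have $C(1-\phi_1,1-\phi_2)=0$ even with both $\phi_i<1$). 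Your coordinate-wise treatment via the backward hitting times $N_{it}=\min\{k\geq 0:\alpha_{i,t-k}=0\}$, with $P(N_{it}\geq k)=\phi_i^{\,k}$, needs only the marginal conditions $\phi_i<1$ and allows the two coordinates to be ``reset'' at different times, so it is copula-free and matches the theorem's hypothesis exactly --- a sharper route than the paper's. Second, your conclusion is stronger and more rigorous: the explicit representation $Z_{it}=\epsilon_{i,\,t-N_{it}}$ exhibits $\boldsymbol{Z}_t$ as a fixed shift-equivariant measurable functional of the i.i.d.\ driving sequence $(\boldsymbol{\alpha}_s,\boldsymbol{\epsilon}_s)_{s\leq t}$, yielding \emph{strict} stationarity of the constructed solution, whereas the paper argues somewhat informally that $\boldsymbol{Z}_t$ becomes independent of the remote past $\boldsymbol{Z}_{t-(m+1)}$ as $m\to\infty$ without spelling out the limiting construction. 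As a bonus, your representation immediately recovers the marginal-distribution result of Theorem \ref{th2} (since $\epsilon_{i,\,t-N_{it}}$ is independent of $N_{it}$ and distributed as $\epsilon_{it}$). The only point to make explicit if you write this up fully is the verification that the representation actually solves the recursion (if $\alpha_{it}=1$ then $N_{it}=N_{i,t-1}+1$ so $Z_{it}=Z_{i,t-1}$; if $\alpha_{it}=0$ then $N_{it}=0$ and $Z_{it}=\epsilon_{it}$), which you gesture at but do not state.
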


\begin{proof}
    To examine under what conditions the bivariate model is stationary, we  use the recursion back to $m\to\infty$. Based on the recursion we have that 
\begin{eqnarray*}
\boldsymbol{Z}_t &=&\boldsymbol{\alpha}_{t} \odot \boldsymbol{Z}_{t-1} + \boldsymbol{\beta}_{t} \odot \boldsymbol{\epsilon}_t\\
&=& \boldsymbol{\alpha}_{t}\odot (\boldsymbol{\alpha}_{t-1} \odot\boldsymbol{Z}_{t-2} +\boldsymbol{\beta}_{t-1} \odot \boldsymbol{\epsilon}_{t-1}) +\boldsymbol{\beta}_{t} \odot\boldsymbol{\epsilon}_t
\end{eqnarray*}
and after certain steps we derive that

\begin{equation*}
\boldsymbol{Z}_t =\prod_{d=0}^{m}\boldsymbol{\alpha}_{t-d}\odot \boldsymbol{Z}_{t-(m+1)}+ \sum_{d=1}^{m}\prod_{j=0}^{d-1}\boldsymbol{\alpha}_{t-j} \odot \boldsymbol{\beta}_{t-d} \odot \boldsymbol{\epsilon}_{t-d}+\boldsymbol{\beta}_{t} \odot \boldsymbol{\epsilon}_t
\end{equation*}

We would like $\boldsymbol{Z}_t$ be independent to $\boldsymbol{Z}_{t-(m+1)}$, then $\prod_{d=0}^{m}\boldsymbol{\alpha}_{t-d}$ should goes to $0$, as $m \to \infty$, this means that at least one term of the product should be $0$. To achieve at least one $\boldsymbol{\alpha}_{t-d}$, $d=0,\ldots,m$ to be the zero vector, it should hold that $\pi_{00}=P(a_{1,t-d}=0,a_{2,t-d}=0) \neq 0$. 
It holds that
\begin{equation*}
P(a_{1,t-d}=0,a_{2,t-d}=0;\delta_\alpha) = C(1-\phi_1,1-\phi_2;\delta_\alpha)
\end{equation*}
so, 
$P(a_{1t}=0,a_{2t}=0;\delta_\alpha)=0$ when $1-\phi_1=1-\phi_2=0$ namely $\phi_1=\phi_2=1$. Thus, stationarity is ensured when $0\leq\phi_1<1$ and $0\leq \phi_2<1$. This result is expected as from the properties of univariate DAR($1$) model, the stationary is ensured when $0\leq\phi<1$. 
\end{proof}

\subsection{Marginal distribution}

\begin{theorem}\label{th2}
     Under stationarity assumption it holds that the marginal distribution of $Z_{kt}$ is the same as of $\epsilon_{kt}$, for $k=1,2$. 
\end{theorem}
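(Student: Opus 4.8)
The plan is to reduce the bivariate statement to two separate applications of the familiar univariate DAR($1$) marginal result, exploiting the fact that the copula coupling between the two series governs only their joint behaviour and leaves the marginal dynamics of a single coordinate untouched. Fix $k\in\{1,2\}$ and read off the $k$-th coordinate of the recursion \eqref{Eq:BDAR($1$)},
\begin{equation*}
Z_{kt}=\alpha_{kt}Z_{k,t-1}+(1-\alpha_{kt})\epsilon_{kt},
\end{equation*}
which is exactly a univariate DAR($1$) recursion. The first point I would check is that, although $\alpha_{1t}$ and $\alpha_{2t}$ are linked through $C(\cdot,\cdot;\delta_\alpha)$ and $\epsilon_{1t},\epsilon_{2t}$ through $C(\cdot,\cdot;\delta_\epsilon)$, the marginal law of $\alpha_{kt}$ is still $\mathrm{Bernoulli}(\phi_k)$ and that of $\epsilon_{kt}$ is still $\boldsymbol p_\epsilon^{(k)}$; this holds because a copula reproduces its prescribed margins, so the cross-series dependence is irrelevant for a one-coordinate calculation.

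Next I would condition on the Bernoulli switch $\alpha_{kt}$. Since $\boldsymbol\alpha_t$ is independent of $(\boldsymbol Z_s)_{s<t}$ and of $\boldsymbol\epsilon_t$, we have $\alpha_{kt}$ independent of $Z_{k,t-1}$ and of $\epsilon_{kt}$, so the law of total probability gives, for every state $s\in\mathcal S_k$,
\begin{equation*}
P(Z_{kt}=s)=\phi_k\,P(Z_{k,t-1}=s)+(1-\phi_k)\,P(\epsilon_{kt}=s).
\end{equation*}
Invoking stationarity, which Theorem~\ref{th1} supplies under the hypothesis $0\le\phi_1,\phi_2<1$, the quantity $P(Z_{kt}=s)$ does not depend on $t$. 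Substituting $P(Z_{k,t-1}=s)=P(Z_{kt}=s)$ and rearranging yields $(1-\phi_k)P(Z_{kt}=s)=(1-\phi_k)P(\epsilon_{kt}=s)$; dividing by $1-\phi_k>0$, which is legitimate precisely because $\phi_k<1$, gives $P(Z_{kt}=s)=P(\epsilon_{kt}=s)$ for all $s$, establishing the claim for each $k$.

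If a more self-contained derivation is wanted, I would instead start from the unwound expansion obtained in the proof of Theorem~\ref{th1}, let $m\to\infty$ so that the leading term $\prod_{d=0}^m\alpha_{k,t-d}\,Z_{k,t-(m+1)}$ vanishes almost surely, and identify the resulting infinite mixture of the $\epsilon_{k,t-d}$ as a probability distribution coinciding with the innovation marginal; the fixed-point equation above is just the compact form of this limit. The main point requiring care, though a mild one, is the appeal to stationarity: one must confirm that the condition $\phi_k<1$ guaranteed by Theorem~\ref{th1} is exactly what both makes the division valid and forces the fixed-point equation to have the unique solution $P(\epsilon_{kt}=s)$, rather than being vacuously satisfied (as it would be in the degenerate case $\phi_k=1$, which is precisely the non-stationary regime excluded here).
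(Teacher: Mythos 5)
Your proposal is correct and follows essentially the same route as the paper, which simply observes that each coordinate of the recursion \eqref{Eq:BDAR($1$)} is marginally a univariate DAR($1$) process and cites its known stationary-distribution property. You additionally spell out the details the paper leaves implicit --- that the copula coupling preserves the $\mathrm{Bernoulli}(\phi_k)$ and $\boldsymbol{p}_\epsilon^{(k)}$ margins, and the fixed-point calculation $P(Z_{kt}=s)=\phi_k P(Z_{k,t-1}=s)+(1-\phi_k)P(\epsilon_{kt}=s)$ with the division by $1-\phi_k>0$ --- which is a sound elaboration rather than a different argument.
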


It suffice to note that marginally for each series we have a simple DAR(1) representation. Based on the properties of DAR(1), it implies that the the marginal distribution of $Z_{kt}$ is the same as of $\epsilon_{kt}$, for $k=1,2$.

It is important to note that the result is not valid in general for the joint distribution as we will show next. 

\subsection{Joint distribution of \texorpdfstring{$\boldsymbol{Z}_t$}{Z_t}}

\begin{theorem}\label{th3}
For the joint distribution
\begin{equation*}
p_{ij} = P(Z_{1t}=i,Z_{2t}=j)
\end{equation*}
it holds that
\begin{equation*}
p_{ij}=\frac{(\pi_{10}+\pi_{01})p^{(1)}_{\epsilon_{i}}p^{(2)}_{\epsilon_{j}}+\pi_{00}p_{\epsilon_{ij}}}{1-\pi_{11}},
\end{equation*}
where 
$p^{(1)}_{\epsilon_{i}}$ and $p^{(2)}_{\epsilon_{j}}$ are the marginal distributions of the innovations and $p_{\epsilon_{ij}}$ the joint distribution of the innovations. Moreover $\pi_{ij}=P(\alpha_{1t}=i,\alpha_{2t}=j)$ i.e. the joint probabilities of the Bernoulli mixing random variables.
\end{theorem}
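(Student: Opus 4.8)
The plan is to compute $p_{ij}$ by conditioning on the four possible realizations of the mixing vector $\boldsymbol{\alpha}_t$ and applying the law of total probability. Since, by Definition~\ref{de4}, $\boldsymbol{\alpha}_t$ is independent of both $\boldsymbol{\epsilon}_t$ and $(\boldsymbol{Z}_s)_{s<t}$, the mixing weights $\pi_{11},\pi_{10},\pi_{01},\pi_{00}$ factor out of the joint event, so I may write
\begin{equation*}
p_{ij} = \pi_{11}\,q_{11} + \pi_{10}\,q_{10} + \pi_{01}\,q_{01} + \pi_{00}\,q_{00},
\end{equation*}
where each $q_{ab}$ denotes the probability that $(Z_{1t},Z_{2t})=(i,j)$ under the deterministic assignment dictated by the corresponding case of Definition~\ref{de4}, computed from the distributions of $\boldsymbol{Z}_{t-1}$ and $\boldsymbol{\epsilon}_t$ alone.

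Next I would evaluate the four terms separately. In the $\pi_{11}$ case one has $\boldsymbol{Z}_t=\boldsymbol{Z}_{t-1}$, so $q_{11}=P(Z_{1,t-1}=i,Z_{2,t-1}=j)$, which by stationarity (Theorem~\ref{th1}) equals $p_{ij}$ itself. In the $\pi_{00}$ case $\boldsymbol{Z}_t=\boldsymbol{\epsilon}_t$, hence $q_{00}=p_{\epsilon_{ij}}$, the joint innovation pmf. The two mixed cases are the ones requiring care: for $\pi_{10}$ one has $(Z_{1t},Z_{2t})=(Z_{1,t-1},\epsilon_{2t})$, and because $\epsilon_{2t}$ is independent of $\boldsymbol{Z}_{t-1}$, the joint probability factors as $P(Z_{1,t-1}=i)\,P(\epsilon_{2t}=j)$; invoking Theorem~\ref{th2}, the marginal of $Z_{1,t-1}$ coincides with that of $\epsilon_{1t}$, so $q_{10}=p^{(1)}_{\epsilon_i}p^{(2)}_{\epsilon_j}$. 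A symmetric argument yields $q_{01}=p^{(1)}_{\epsilon_i}p^{(2)}_{\epsilon_j}$ as well.

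Finally I would substitute these expressions, collect the two mixed contributions, and solve the resulting linear identity
\begin{equation*}
p_{ij} = \pi_{11}\,p_{ij} + (\pi_{10}+\pi_{01})\,p^{(1)}_{\epsilon_i}p^{(2)}_{\epsilon_j} + \pi_{00}\,p_{\epsilon_{ij}}
\end{equation*}
for $p_{ij}$, dividing through by $1-\pi_{11}$, which is strictly positive under stationarity since $\pi_{11}\le\phi_1<1$. The only delicate step is the factorization in the two mixed cases: it hinges both on the assumed independence of the innovations from the past of the process and on the marginal identity of Theorem~\ref{th2}, without which $q_{10}$ and $q_{01}$ would not reduce to a product of the innovation marginals. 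Everything else is a routine application of total probability together with the stationarity established in Theorem~\ref{th1}.
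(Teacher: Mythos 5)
Your proposal is correct and follows essentially the same route as the paper's own proof: conditioning on the four outcomes of $\boldsymbol{\alpha}_t$, using stationarity to identify the $\pi_{11}$ term with $p_{ij}$, invoking the independence of $\boldsymbol{\epsilon}_t$ from $(\boldsymbol{Z}_s)_{s<t}$ together with Theorem~\ref{th2} to factor the mixed terms, and solving the resulting linear identity. Your explicit remark that $1-\pi_{11}>0$ because $\pi_{11}\le\phi_1<1$ is a small justification the paper leaves implicit, but it does not change the argument.
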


\begin{proof}
    Denote the joint distribution of the two series as $p_{ij}=P(Z_{1t}=i,Z_{2t}=j)$. Under stationarity, it holds that 
    \begin{equation*}
    P(Z_{1t}=i,Z_{2t}=j)=P(Z_{1,t-1}=i,Z_{2,t-1}=j)=p_{ij}. 
    \end{equation*}
In addition, it holds that: 
\begin{eqnarray*}
P(Z_{1t}=i)&= &{\sum_{j}} p_{ij}=p_{i.} \hspace{2cm} \mbox{and} \\
P(Z_{2t}=j)&=&\sum_{i}p_{ij}=p_{.j}
\end{eqnarray*}
We have four possible ways to observe the pair $(i,j)$ at time $t$. Thus, the joint marginal probability is given by: 
\begin{eqnarray*}
P(Z_{1t}=i,Z_{2t}=j)&=&P(\alpha_{1t}=1,\alpha_{2t}=1)P(Z_{1,t-1}=i,Z_{2,t-1}=j|\alpha_{1t}=1,\alpha_{2t}=1)\\
&+&P(\alpha_{1t}=1,\alpha_{2t}=0)P(Z_{1,t-1}=i,\epsilon_{2t}=j|\alpha_{1t}=1,\alpha_{2t}=0)\\
&+&P(\alpha_{1t}=0,\alpha_{2t}=1)P(\epsilon_{1t}=i,Z_{2,t-1}=j|\alpha_{1t}=0,\alpha_{2t}=1)\\
&+&P(\alpha_{1t}=0,\alpha_{2t}=0)P(\epsilon_{1t}=i,\epsilon_{2t}=j|\alpha_{1t}=0,\alpha_{2t}=0).
\end{eqnarray*}
Under independence of $(\boldsymbol{Z}_s)_{s<t}$ and  $\boldsymbol{\epsilon}_t$, stationarity and assuming that marginal distribution of $Z_{kt}$ is the same as of $\epsilon_{kt}$, for $k=1,2$ we have that
\begin{equation*}
p_{ij}=\pi_{11}p_{ij}+\pi_{10}p^{(1)}_{\epsilon_{i}}p^{(2)}_{\epsilon_{j}}+\pi_{01}p^{(1)}_{\epsilon_{i}}p^{(2)}_{\epsilon_{j}}+\pi_{00}p_{\epsilon_{ij}}
\end{equation*}
which can be also written as
\begin{equation*}
p_{ij}(1-\pi_{11})=(\pi_{10}+\pi_{01})p^{(1)}_{\epsilon_{i}}p^{(2)}_{\epsilon_{j}}+\pi_{00}p_{\epsilon_{ij}}
\end{equation*}
that gives the required relationship
\begin{equation}
p_{ij}=\frac{(\pi_{10}+\pi_{01})p^{(1)}_{\epsilon_{i}}p^{(2)}_{\epsilon_{j}}+\pi_{00}p_{\epsilon_{ij}}}{1-\pi_{11}}.
\label{formula}
\end{equation}
So, even though $Z_{kt}$ and $\epsilon_{kt}$ for $k=1,2$ have the same marginal distribution, the marginal distribution of $\boldsymbol{Z}_t$ differs from the marginal distribution of $\boldsymbol{\epsilon}_t$. The only case for which $\boldsymbol{Z}_t$ and $\boldsymbol{\epsilon}_t$ share the same marginal distribution, i.e. it holds that 
$p_{ij} = p_{\epsilon_{ij}}$,
is when there is one common random mechanism that describes both of the series. Namely, we have the representation
\begin{align*}
\alpha_{t} &\sim Bernoulli(\phi)\\
\begin{bmatrix} Z_{1t} \\ Z_{2t} \end{bmatrix}&=\alpha_{t} \odot \begin{bmatrix} Z_{1,t-1} \\ Z_{2,t-1} \end{bmatrix}+
 (1-\alpha_{t}) \odot \begin{bmatrix} \epsilon_{1t} \\ \epsilon_{2t} \end{bmatrix} \nonumber \\
\end{align*}
Then, it holds that:
\begin{align*}
p_{ij}&=\phi P(Z_{1,t-1}=i,Z_{2,t-1}=j)+(1-\phi)P(\epsilon_{1t}=i,\epsilon_{2t}=j)\\
&=\phi p_{ij}+(1-\phi)p_{\epsilon_{ij}}
\end{align*}
which gives that $p_{ij}=p_{\epsilon_{ij}}$. 
One can see that in such case $\pi_{01}=\pi_{10}=0$ in \ref{formula}.
\end{proof}

\subsection{Joint Conditional probabilities}
In this part we would like to define the joint distribution of $\boldsymbol{Z}_t$ conditional to the vector $\boldsymbol{Z}_{t-1}$, as we have considered a BDAR($1$) model. These probabilities are useful for the estimation of the model as it will be discussed later. To define the joint conditional probabilities, we also have to take into consideration the four possible outcomes from the joint distribution of the two Bernoulli random variables. 

\begin{theorem}\label{th4}
The joint conditional to the past probabilities are given by: 
\begin{align}
&P(Z_{1t}=i,Z_{2t}=j|Z_{1,t-1}=s,Z_{2,t-1}=\ell) \nonumber\\
&=\pi_{11}I((i,j)=(s,\ell))+\pi_{00}P(\epsilon_{1t}=i,\epsilon_{2t}=j)+\pi_{10}P(Z_{1,t-1}=i,\epsilon_{2t}=j) \nonumber\\
&+\pi_{01}P(\epsilon_{1t}=i,Z_{2,t-1}=j) \nonumber \\
&=\pi_{11}I((i,j)=(s,\ell))+\pi_{00}p_{\epsilon_{ij}}+\pi_{10}I(i=s)p^{(2)}_{\epsilon_{j}}+\pi_{01}p^{(1)}_{\epsilon_i}I(j=\ell) \nonumber \\
&=(1-\pi_{10}-\pi_{01}-\pi_{00})I((i,j)=(s,\ell))+\pi_{00}p_{e_{ij}}+\pi_{10}I(i=s)p^{(2)}_{\epsilon_{j}}+\pi_{01}p^{(1)}_{\epsilon_i}I(j=\ell)
\label{Eq:Joint conditional probabilities}
\end{align}
where $I(\cdot)$ indicator function that takes value $1$ when $\boldsymbol{Z}_t=\boldsymbol{Z}_{t-1}$ and $0$ otherwise. 
\end{theorem}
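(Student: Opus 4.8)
The plan is to apply the law of total probability, decomposing over the four possible realizations of the Bernoulli mixing vector $\boldsymbol{\alpha}_t=(\alpha_{1t},\alpha_{2t})^{'}$, in direct parallel with the unconditional computation carried out for Theorem \ref{th3}. First I would write
\[
P(Z_{1t}=i,Z_{2t}=j \mid Z_{1,t-1}=s, Z_{2,t-1}=\ell) = \sum_{(a,b)\in\{0,1\}^2} \pi_{ab}\, P\bigl(Z_{1t}=i, Z_{2t}=j \mid \alpha_{1t}=a, \alpha_{2t}=b, Z_{1,t-1}=s, Z_{2,t-1}=\ell\bigr),
\]
where the weight $P(\alpha_{1t}=a,\alpha_{2t}=b \mid \boldsymbol{Z}_{t-1})=\pi_{ab}$ is justified by the assumption in Definition \ref{de4} that $\boldsymbol{\alpha}_t$ is independent of $(\boldsymbol{Z}_s)_{s<t}$.

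Next I would evaluate the four conditional factors one at a time, substituting the componentwise recursion $Z_{kt}=\alpha_{kt}Z_{k,t-1}+(1-\alpha_{kt})\epsilon_{kt}$. On the event $\{\alpha_{1t}=1,\alpha_{2t}=1\}$ the recursion forces $\boldsymbol{Z}_t=\boldsymbol{Z}_{t-1}=(s,\ell)^{'}$, contributing the factor $I((i,j)=(s,\ell))$. On $\{\alpha_{1t}=1,\alpha_{2t}=0\}$ we get $Z_{1t}=Z_{1,t-1}=s$ and $Z_{2t}=\epsilon_{2t}$, and using the independence of $\epsilon_{2t}$ from both $\boldsymbol{\alpha}_t$ and the past, the factor becomes $I(i=s)\,p^{(2)}_{\epsilon_{j}}$; the case $\{\alpha_{1t}=0,\alpha_{2t}=1\}$ is symmetric and gives $p^{(1)}_{\epsilon_{i}}\,I(j=\ell)$. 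Finally, on $\{\alpha_{1t}=0,\alpha_{2t}=0\}$ we have $\boldsymbol{Z}_t=\boldsymbol{\epsilon}_t$, so independence of $\boldsymbol{\epsilon}_t$ from the past yields the joint innovation probability $p_{\epsilon_{ij}}$. Collecting the four weighted terms reproduces the second displayed line of the statement, and the third line then follows immediately from the identity $\pi_{11}=1-\pi_{10}-\pi_{01}-\pi_{00}$.

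The argument is essentially bookkeeping, so I do not expect a deep obstacle; the only point requiring genuine care is the interpretation of the mixed expressions appearing in the first displayed line of the statement. The quantity written as $P(Z_{1,t-1}=i,\epsilon_{2t}=j)$ must be read under the conditioning $Z_{1,t-1}=s$, so that the event $\{Z_{1,t-1}=i\}$ collapses to the indicator $I(i=s)$ rather than to a marginal probability of the past state. I would state this explicitly to avoid conflating the already-observed past value $s$ with the generic summation index $i$. Beyond that, one should emphasize that conditioning jointly on $\boldsymbol{\alpha}_t$ and $\boldsymbol{Z}_{t-1}$ renders the surviving innovation factor independent of the conditioning event, which is exactly what permits the innovation marginals $p^{(1)}_{\epsilon_{i}}$, $p^{(2)}_{\epsilon_{j}}$ and the joint pmf $p_{\epsilon_{ij}}$ to enter unchanged.
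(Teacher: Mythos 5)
Your proof is correct and takes essentially the same approach as the paper, which likewise argues by enumerating the four possible outcomes of the mixing vector $(\alpha_{1t},\alpha_{2t})$ and summing the cases — exactly your law-of-total-probability decomposition, written informally. Your version simply makes explicit what the paper leaves implicit: the independence of $\boldsymbol{\alpha}_t$ and $\boldsymbol{\epsilon}_t$ from the past that justifies the weights $\pi_{ab}$ and the innovation factors, and the collapse of the mixed events to the indicators $I(i=s)$ and $I(j=\ell)$ under the conditioning.
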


The proof is based on enumerating all possible outcomes to move from the current values $s$ and $\ell$ to the new ones $i $ and $j$. For example if $i \ne s$ and $j \ne \ell$ then this can happen only if we  have taken new values from the innovation distributions. On the other hand for the case when 
$i= s$ and $j = \ell$, then we can have 4 cases, staying at the same values, generating new values from the innovation and also the two cases when only one of the two values is the same and the other is taken from the innovations. Summing all possible cases gives the above formula.

\subsection{Cross-Covariance \& Cross-Correlation matrices}

Let $\boldsymbol{\Gamma}(k)$ denotes the cross-covariance matrix of $Z_{1t}$ and $Z_{2t}$ at lag $k$, with elements $\gamma_{rs}(k)=Cov(Z_{rt},Z_{st})=E((Z_{rt}-\mu_r)(Z_{s,t-k}-\mu_s)),~ s,r=1,2$:
 
\begin{align*}
\boldsymbol{\Gamma}{(k)}=\begin{pmatrix}
Cov(Z_{1t},Z_{1,t-k}) & Cov(Z_{1t},Z_{2,t-k})\\
Cov(Z_{1,t-k},Z_{2t}) & Cov(Z_{2t},Z_{2,t-k})
\end{pmatrix}= \begin{pmatrix}
\gamma_{11}(k) & \gamma_{12}(k)\\   
\gamma_{21}(k) & \gamma_{22}(k)  
\end{pmatrix}
\end{align*}
We also define the cross-correlation matrix of $Z_{1t}$ and $Z_{2t}$ at lag $k$, denoted by $\rho(k)$. Each element of $\boldsymbol{\rho}(k)$ is: 
$$\rho_{rs}(k)=\frac{E((Z_{rt}-\mu_r)(Z_{s,t-k}-\mu_s))}{\sqrt{E((Z_{rt}-\mu_r)^2)}\sqrt{E((Z_{s,t-k}-\mu_s)^2)}}=\frac{\gamma_{rs}(k)}{\sqrt{\gamma_{rr}(0)}\sqrt{\gamma_{ss}(0)}},~ s,r=1,2:$$

\begin{align*}
\boldsymbol{\rho}{(k)}=\begin{pmatrix}
Cor(Z_{1t},Z_{1,t-k}) & Cor(Z_{1t},Z_{2,t-k})\\
Cor(Z_{1,t-k},Z_{2t}) & Cor(Z_{2t},Z_{2,t-k})
\end{pmatrix}= \begin{pmatrix}
\rho_{11}(k) & \rho_{12}(k)\\   
\rho_{21}(k) & \rho_{22}(k)  
\end{pmatrix}
\end{align*}
When $r=s$, then $\boldsymbol{\Gamma}(k)$ is the auto-covariance matrix and $\boldsymbol{\rho}(k)$ the auto-correlation matrix of the respective time series.\\ 
For the following calculations it holds that, 
\begin{align*}
E(a_{1t}a_{2t})&=\phi_{12}\\
E(\beta_{1t}\beta_{2t})&=Cov(\beta_{1t},\beta_{2t})+E(\beta_{1t})E(\beta_{2t})\\
&=Cov(a_{1t},a_{2t})+E(1-a_{1t})E(1-a_{2t})\\
&=\phi_{12}+(1-\phi_1)(1-\phi_2)=(1-\phi_1-\phi_2+\phi_{12})\\
E(\alpha_{1t}^2)&=V(\alpha_{1t})+E(\alpha_{1t})^{2}=\phi_1(1-\phi_1)+\phi_1^{2}=\phi_1\\
E(\alpha_{2t}^2)&=V(\alpha_{2t})+E(\alpha_{2t})^{2}=\phi_2(1-\phi_2)+\phi_2^{2}=\phi_2\\
E(\alpha_{1t}\beta_{2t})&=E(\alpha_{1t}(1-\alpha_{2t}))=E(\alpha_{1t})-E(\alpha_{1t}\alpha_{2t})=\phi_1-\phi_{12}\\
E(\alpha_{2t}\beta_{1t})&=E(\alpha_{2t}(1-\alpha_{1t}))=E(\alpha_{2t})-E(\alpha_{1t}\alpha_{2t})=\phi_2-\phi_{12}\
\end{align*}

\begin{itemize}
    \item For lag $0$ we have that
\begin{align*}
\gamma_{11}{(0)}&=E((Z_{1t}-\mu_1)(Z_{1t}-\mu_1))=
E((Z_{1t}-\mu_1)^2)\\&=V(Z_{1t})=V(\epsilon_{1t}) \hspace{1cm} \mbox{and} \\
\rho_{11}(0)&=1
\end{align*}

Similarly, 
\begin{align*}
\gamma_{22}{(0)}&=E((Z_{2t}-\mu_2)(Z_{2t}-\mu_2))=
E((Z_{2t}-\mu_2)^2)\\&=V(Z_{2t})=V(\epsilon_{2t}) \hspace{1cm} \mbox{and}\\
\rho_{22}(0)&=1
\end{align*}
We also get that
\begin{align*}
\gamma_{12}(0)&=E(Z_{1t}Z_{2t})-E(Z_{1t})E(Z_{2t})\\
E(Z_{1t}Z_{2t})&=E[(\alpha_{1t}Z_{1,t-1}+\beta_{1t}\epsilon_{1t})(\alpha_{2t}Z_{2,t-1}+\beta_{2t}\epsilon_{2t})]\\
&=E(\alpha_{1t}\alpha_{2t}Z_{1,t-1}Z_{2,t-1})+E(\alpha_{1t}Z_{1,t-1}\beta_{2t}\epsilon_{2t})+E(\beta_{1t}\epsilon_{1t}\alpha_{2t}Z_{2,t-1})\\
&+E(\beta_{1t}\beta_{2t}\epsilon_{1,t-1}\epsilon_{2,t-1})\\
&=E(\alpha_{1t}\alpha_{2t})E(Z_{1,t-1}Z_{2,t-1})+E(\alpha_{1t}\beta_{2t})E(Z_{1,t-1})E(\epsilon_{2t})\\
&+E(\alpha_{2t}\beta_{1t})E(Z_{2,t-1})E(\epsilon_{1t})+E(\beta_{1t}\beta_{2t})E(\epsilon_{1,t-1}\epsilon_{2,t-1})\\
&=\phi_{12}E(Z_{1t}Z_{2t})+(\phi_1-\phi_{12})\mu_1\mu_2+(\phi_2-\phi_{12})\mu_2\mu_1\\
&+(1-\phi_1-\phi_2+\phi_{12})E(\epsilon_{1,t-1}\epsilon_{2,t-1})
\end{align*}
and rearranging the terms we get
\begin{align*}
E(Z_{1t}Z_{2t})&=\frac{\mu_1\mu_2(\phi_1-\phi_{12}+\phi_2-\phi_{12})+(1-\phi_1-\phi_2+\phi_{12})E(\epsilon_{1,t-1}\epsilon_{2,t-1})}{1-\phi_{12}}
\end{align*}
and then we get that
\hspace{-2cm}
\begin{align*}
 \gamma_{12}(0)&=\frac{\mu_1\mu_2(\phi_1-\phi_{12}+\phi_2-\phi_{12})+(1-\phi_1-\phi_2+\phi_{12})E(\epsilon_{1,t-1}\epsilon_{2,t-1})}{1-\phi_{12}}-\mu_1\mu_2\\ 
&=\frac{(1-\phi_1-\phi_2+\phi_{12})(E(\epsilon_{1,t-1}\epsilon_{2,t-1})-\mu_1\mu_2)}{1-\phi_{12}} \hspace{1cm} \mbox{and}\\
\rho_{12}(0)&=\rho_{21}(0)=\frac{\gamma_{12}(0)}{\sqrt{\gamma_{11}(0)}\sqrt{\gamma_{22}(0)}}
\end{align*}

\item While for lag $k$ we have that 
\begin{align*}
\gamma_{11}(k)&=E(Z_{1t}Z_{1,t-k})- E(Z_{1t})E(Z_{1,t-k})\\
E(Z_{1t}Z_{1,t-k})&=E(Z_{1,t-k}(\alpha_{1t}Z_{1,t-1}+\beta_{1t}\epsilon_{1t}))\\
&=E(\alpha_{1t}Z_{1,t-k}Z_{1,t-1})+E(\beta_{1t}Z_{1,t-k}\epsilon_{1t})\\
&=E(\alpha_{1t})E(Z_{1,t-k}Z_{1,t-1})+E(\beta_{1t})E(Z_{1,t-k})E(\epsilon_{1t})\\
&=\phi_1E(Z_{1,t-k}Z_{1,t-1})+(1-\phi_1)\mu_1^{2}
\end{align*}
and then we get that 
\begin{align*}
\gamma_{11}(k)&=\phi_1E(Z_{1,t-k}Z_{1,t-1})+(1-\phi_1)\mu_1^{2} -\mu_1^{2}\\
&=\phi_1(E(Z_{1,t-k}Z_{1,t-1})-\mu_1^{2})\\
&=\phi_1\gamma_{11}(k-1)
\end{align*}

$$\rho_{11}(k)=\frac{\gamma_{11}(k)}{\gamma_{11}(0)}$$

Similarly, 
$$\gamma_{22}(k)=\phi_2\gamma_{22}(k-1)$$
$$\rho_{22}(k)=\frac{\gamma_{22}(k)}{\gamma_{22}(0)}$$

For the cross-correlation at lag $k$, we have that
\begin{align*}
\gamma_{12}(k)&=E(Z_{1t}Z_{2,t-k})-E(Z_{1t})E(Z_{2,t-k})\\
E(Z_{1t}Z_{2,t-k})&=E[Z_{2,t-k}(\alpha_{1t}Z_{1,t-1}+\beta_{1t}\epsilon_{1t})]\\
&=E(\alpha_{1t}Z_{2,t-k}Z_{1,t-1})+E(\beta_{1t}Z_{2,t-k}\epsilon_{1t})\\
&=E(\alpha_{1t})E(Z_{2,t-k}Z_{1,t-1})+E(\beta_{1t})E(Z_{2,t-k})E(\epsilon_{1t})\\
&=\phi_1E(Z_{2,t-k}Z_{1,t-1}) +(1-\phi_1)\mu_2\mu_1
\end{align*}

and then we get
\begin{align*}
\gamma_{12}(k)&=\phi_1E(Z_{2,t-k}Z_{1,t-1}) +(1-\phi_1)\mu_2\mu_1-\mu_1\mu_2\\
&=\phi_1(E(Z_{2,t-k}Z_{1,t-1})-\mu_1\mu_2)\\
&=\phi_1\gamma_{12}(k-1)
\end{align*}
$$\rho_{12}(k)=\frac{\gamma_{12}(k)}{\sqrt{\gamma_{11}(0)}\sqrt{\gamma_{22}(0)}}$$

Similarly, 
$$\gamma_{21}(k)=\phi_2 \gamma_{21}(k-1)$$  
$$\rho_{21}(k)=\frac{\gamma_{21}(k)}{\sqrt{\gamma_{11}(0)}\sqrt{\gamma_{22}(0)}}$$
\end{itemize}

\section{Estimation}
\label{Sec4}

Estimation of DAR($1$) model is based on the maximization of the conditional log-likelihood
\begin{align*}
\ell(\boldsymbol{\theta})=\sum_{t=2}^{T}\log(P(Z_t=z_t|Z_{t-1}=z_{t-1}))
\end{align*}
More specifically, for the case of categorical time series,  under DAR($1$) model, the conditional probabilities are given by :
\begin{align*}
   P(Z_t=i|Z_{t-1}=j)=(1-\phi)P(\epsilon_t=i) +\phi I(i=j)
\end{align*}
where $I(\cdot)$ an indicator function that takes value $1$ when $Z_t=Z_{t-1}$ and $0$ otherwise. In addition, it is valid that $P(\epsilon_t=i)=P(Z_t=i)=p_{\epsilon_i}$. Then the vector of parameters to be estimated is  $\boldsymbol{\theta}=(\phi,\boldsymbol{p}_{\epsilon})^{'}$. The parameter $\phi$ should takes values in $[0,1)$, for assuring stationarity. For the vector of probabilities of success, it is valid that $p_{\epsilon_i}\in (0,1]$, for $i=1,\ldots,d$ and  that $\sum_{i=1}^{d}p_{\epsilon_i}=1$, where $d$ is the number of possible states for the observed categorical process, so maximization should be under this restriction.

For the estimation of BDAR($1$) model we follow the same approach by maximizing the conditional log-likelihood. However, under the assumption of a bivariate model we need the joint conditional probabilities. As it has been described in \ref{Eq:Joint conditional probabilities}, $P(\boldsymbol{z}_t|\boldsymbol{z}_{t-1})=P(z_{1t},z_{2t}|z_{1,t-1},z_{2,t-1})$ and then the conditional log-likelihood is given by: 
\begin{align*}
\ell(\boldsymbol{\theta})=\sum_{t=2}^{T}\log(P(\boldsymbol{Z}_t=\boldsymbol{z}_t|\boldsymbol{Z}_{t-1}=\boldsymbol{z}_{t-1}))
\end{align*}

For the case of two ordinal time series, which is our interest, the vector $\boldsymbol{\theta}$ includes the following parameters $\boldsymbol{\theta}=(\phi_1,\phi_2,\delta_{\alpha},\delta_{\epsilon}, \boldsymbol{p}^{(1)}_{\epsilon},\boldsymbol{p}^{(2)}_{\epsilon})^{'}$, Similar to  DAR($1$) example, for $\phi_1$ and $\phi_2$ it holds that $0\leq \phi_1<1, \quad 0\leq \phi_2<1$, to ensure stationarity. For the parameters of multinomial distributions, each probability of success $p^{(k)}_{\epsilon_{s_i}}$ should be in range $[0,1]$, while we have the restrictions $\sum_{i=1}^{d_1}p^{(1)}_{\epsilon_{s_i}}=1$ and $\sum_{i=1}^{d_2}p^{(2)}_{\epsilon_{s_i}}=1$. For copulas parameters $\delta_{\alpha}$ and $\delta_{\epsilon}$ any restriction depends on the chosen copula functions.

\section{Simulations}
\label{Sec5}
In this part we provide a simulation study to examine the performance of the proposed estimators for the BDAR model. More particularly, we assume a  BDAR($1$) for two ordinal ordinal time series $Z_{1t}$ and $Z_{2t}$ with three possible states for each of them $\mathcal{S}_1=\mathcal{S}_2=(s_1,s_2,s_3)$, where $s_1<s_2<s_3$. The innovation terms $\epsilon_{1t}$ and $\epsilon_{2t}$ are marginally assumed to be distributed according to a multinomial distribution, while their joint distribution is defined through a Gumbel copula with parameter $\delta_{\epsilon}=2$. We set:

\begin{align*}
&\epsilon_{1t} \sim \boldsymbol{p}^{(1)}_{\epsilon}=( p^{(1)}_{\epsilon_{1}}=0.15,p^{(1)}_{\epsilon_{2}}=0.6,p^{(1)}_{\epsilon_{3}}=0.25)\\   
&\epsilon_{2t} \sim\boldsymbol{p}^{(2)}_{\epsilon}=( p^{(2)}_{\epsilon_{1}}=0.2,p^{(2)}_{\epsilon_{2}}=0.3,p^{(2)}_{\epsilon_{3}}=0.5)~~~~~\text{and}\\
P(\epsilon_{1t},\epsilon_{2t})&=C_G(F_{\epsilon_{1t}}(\epsilon_{1t}),F_{\epsilon_{2t}}(\epsilon_{2t});\delta_{\epsilon})-C_G(F_{\epsilon_{1t}}(\epsilon_{1t}-1),F_{\epsilon_{2t}}(\epsilon_{2t});\delta_{\epsilon}) \\
    &-C_G(F_{\epsilon_{1t}}(\epsilon_{1t}),F_{\epsilon_{2t}}(\epsilon_{2t}-1);\delta_{\epsilon})+
    C_G(F_{\epsilon_{1t}}(\epsilon_{1t}-1),F_{\epsilon_{2t}}(\epsilon_{2t}-1);\delta_{\epsilon}),
 \end{align*}

where, 
\begin{align}
\label{Eq: Gumbel}
C_{G}(u,v;\delta)=\exp \left[ -((-\log(u))^{\delta}+(-\log(v))^{\delta})^{\frac{1}{\delta}}\right],  \quad \delta \in [1,\infty)
\end{align}
for $u=F_1(\epsilon_{1t})$ and $v=F_2(\epsilon_{2t})$ and $F_k(.),  k=1,2$ the corresponding cumulative distribution function.  

For the mixture mechanisms, the two Bernoulli random variables are also assumed to jointly follow a Gumbel copula with dependence parameter $\delta_{\alpha}=2$, so 
\begin{align*}
&a_{1t} \sim Bernoulli(\phi_1=0.4),\\ 
&a_{2t} \sim Bernoulli(\phi_2=0.25), \hspace{1cm} \mbox{and}\\
P(\alpha_{1t},\alpha_{2t})&=C_G(F_{\alpha_{1t}}(\alpha_{1t}),F_{\alpha_{2t}}(\alpha_{2t});\delta_{\alpha})-C_G(F_{\alpha_{1t}}(\alpha_{1t}-1),F_{\alpha_{2t}}(\alpha_{2t});\delta_{\alpha})\\
    &-C_G(F_{\alpha_{1t}}(\alpha_{1t}),F_{\alpha_{2t}}(\alpha_{2t}-1);\delta_{\alpha})+
    C_G(F_{\alpha_{1t}}(\alpha_{1t}-1),F_{\alpha_{2t}}(\alpha_{2t}-1);\delta_{\alpha}).
\end{align*}

We are interested in examining the performance of the estimators under different sample sizes. More specifically, we assume three scenarios: time series of length $T=100, 500$ and $1000$. In each case we simulate $500$ replicates. The results are presented in Figures \ref{Figure:BDAR_Sim1_Part1} and \ref{Figure:BDAR_Sim1_Part2}. Especially for Figure \ref{Figure:BDAR_Sim1_Part2}, the results are presented in logarithmic scale. Based on them, as we expect, it seems that variability decreases as sample size increases. As bias is concerned, it seems that for all sample sizes the medians of the boxplots of all parameters are very close to the true values. Nevertheless, for $T=100$, it seems that the variance of copulas' parameters is essentially higher compared to other sample sizes.

\begin{figure}[t]
\begin{center}
\includegraphics[scale=0.45]{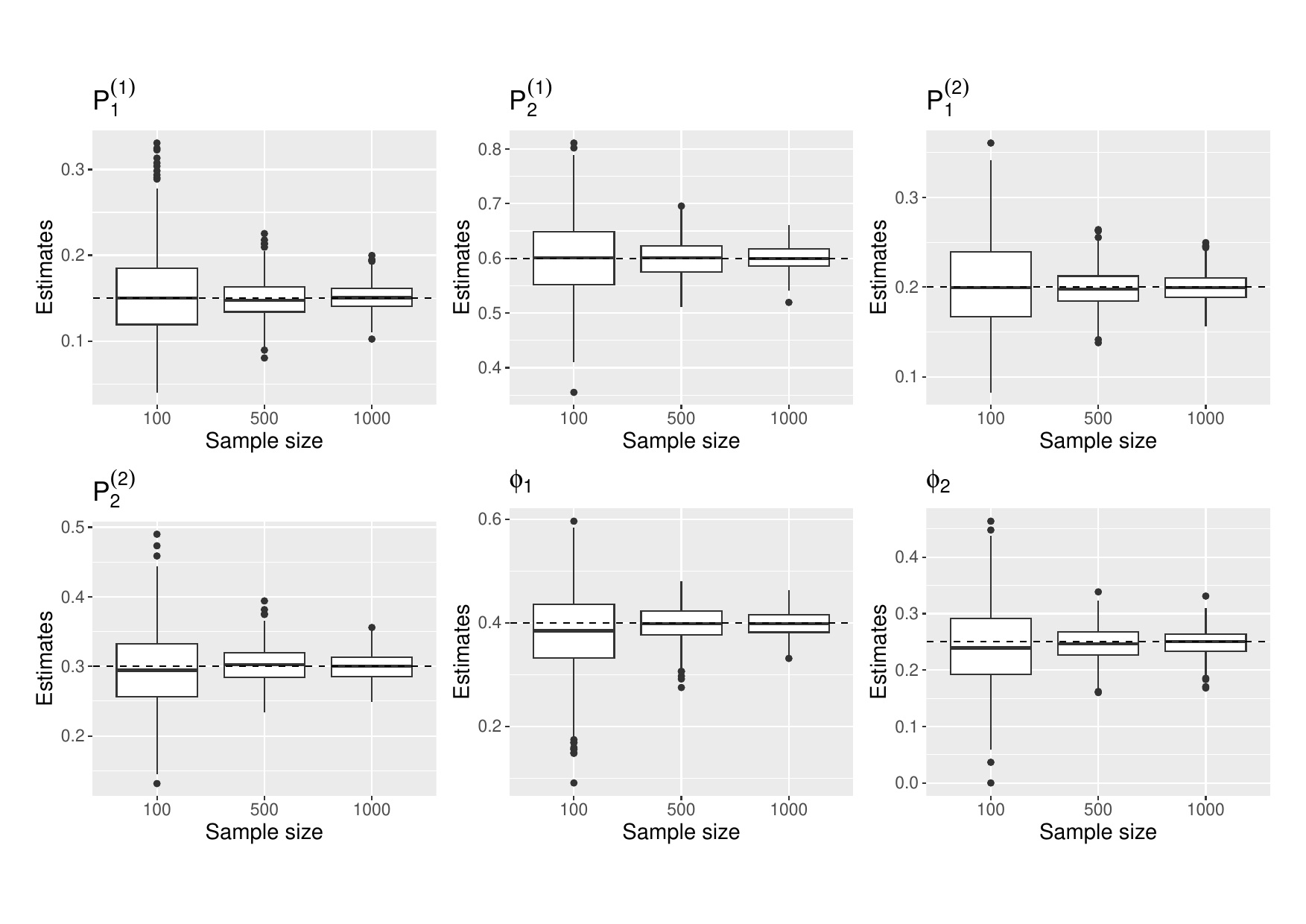}
\caption{\label{Figure:BDAR_Sim1_Part1}Boxplots of estimates of parameters of multinomial and Bernoulli for different sample sizes. Dashed line is the true value of the parameter}
\end{center}
\end{figure}

\begin{figure}[t]
\begin{center}
\includegraphics[scale=0.40]{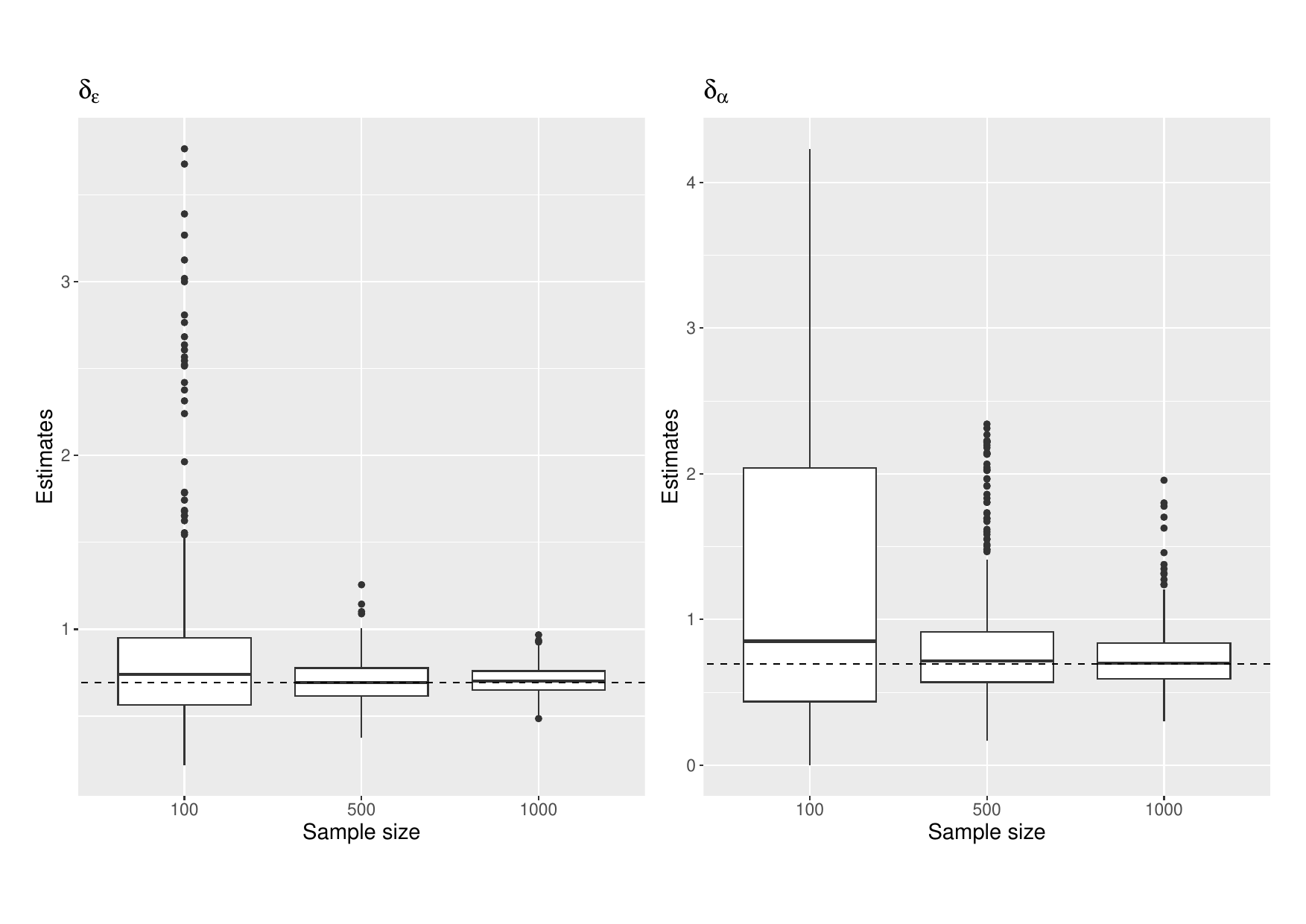}
\caption{\label{Figure:BDAR_Sim1_Part2}Boxplots of estimates of copulas' parameters for different sample sizes in logarithmic scale. Dashed line is the true value of the parameter.}
\end{center}
\end{figure}

We also provide the Mean Absolute Error (MAE) for the vector of parameters under the three scenarios of different sample sizes. The results are presented in Figure \ref{Figure:BDAR_Sim1_MAE} in logarithmic scale. According to it, we conclude that as sample size increases the MAE decreases, which is the desirable property. 

\begin{figure}[t]
\begin{center}
\includegraphics[scale=0.35]{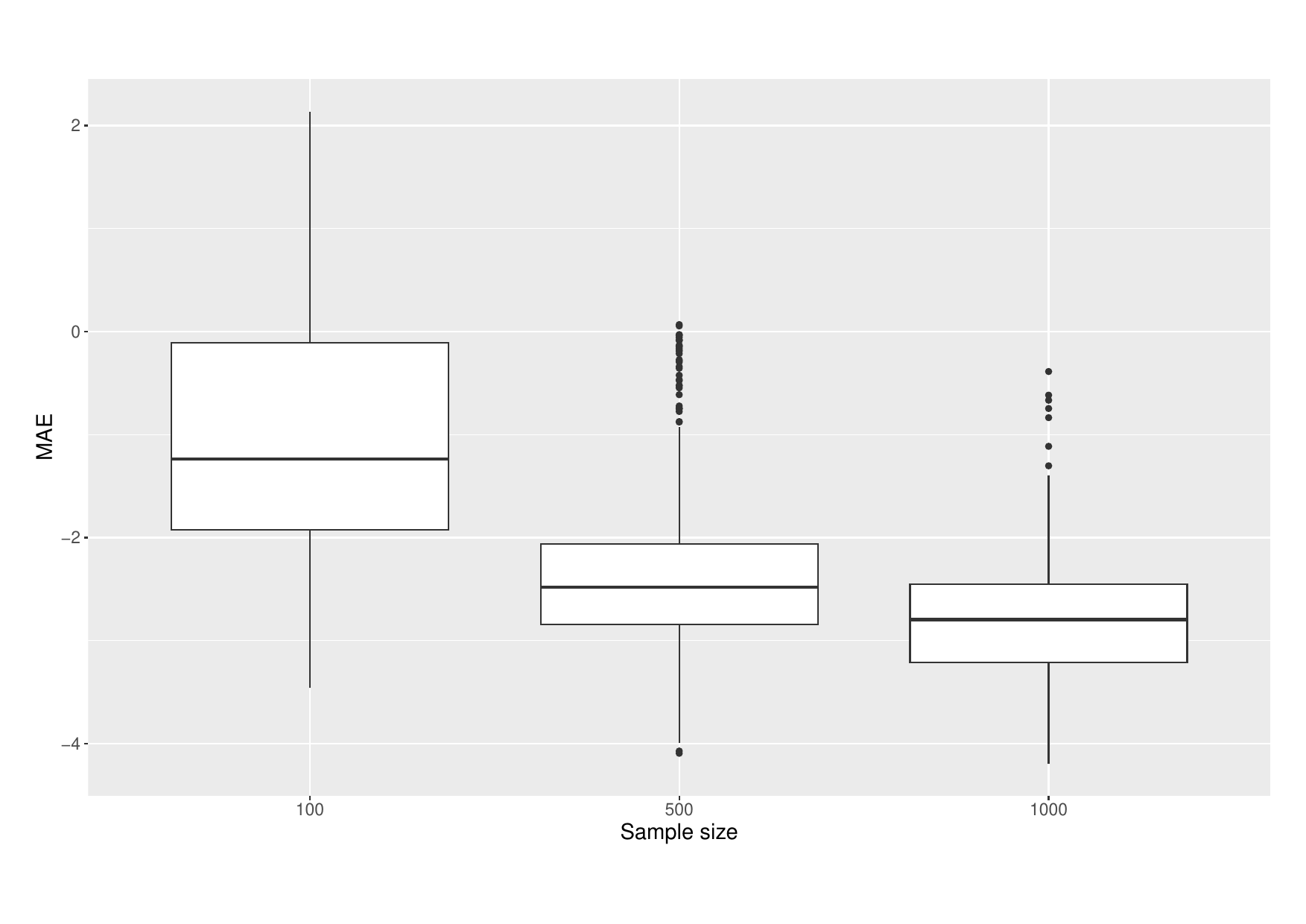}
\caption{\label{Figure:BDAR_Sim1_MAE}Boxplots of MAE for different sample sizes, in logarithmic scale.}
\end{center}
\end{figure}

Taking everything into consideration, we can conclude that as sample size increases the model has the desired properties of consistent and unbiased estimators. Note that for small sample sizes ($T=100$) the variance of copulas' parameter can be large.

\section{Application: Unemployment state Slovakia-Czech Republic }
\label{Sec6}

\subsection{About the data} Unemployment is a social and economic phenomenon that impacts not only governments and societies but also families and individuals. Studying the unemployment rate is therefore of particular importance, as it can reveal the factors that exacerbate the problem and inform policy decisions aimed at alleviating it. Since unemployment is inherently a social phenomenon, it is reasonable to assume that the unemployment rates of different countries may be associated, with one influencing the other. Motivated by this assumption, we focus on jointly modeling the unemployment rates of two neighboring countries: Slovakia and the Czech Republic. The dataset covers quarterly observations from 1998 to 2023. However, aiming at making the values more comparable, we have discretized them based on the quantiles of the whole set of values of the examined countries. Thus, we transform the continuous time series into ordinal time series based on the following rule:

\begin{equation} 
 Z_{kt}= \left\{
\begin{array}{ll}
      1, & 1.9\leq Y_{kt} \leq 5.9\\
      2, & 5.9 < Y_{kt} \leq 7.7\\
      3, & 7.7 < Y_{kt} \leq 12.75 \\
      4, & 12.75 < Y_{kt} \leq 19.9 \\
\end{array} 
\right. 
\label{Eq.Discrete var}
\end{equation}
where $Y_{kt}$ denotes the original continuous time series for each country $k=1,2$ (Slovakia and Czech Republic respectively) and $Z_{kt}$ denotes the ordinal time series for each country, $k=1,2$, $t=1,\ldots,104$. In Figure \ref{Fig:Application} the ordinal time series are presented. 

\begin{figure}[t]
\begin{center}
\includegraphics[scale=0.40]{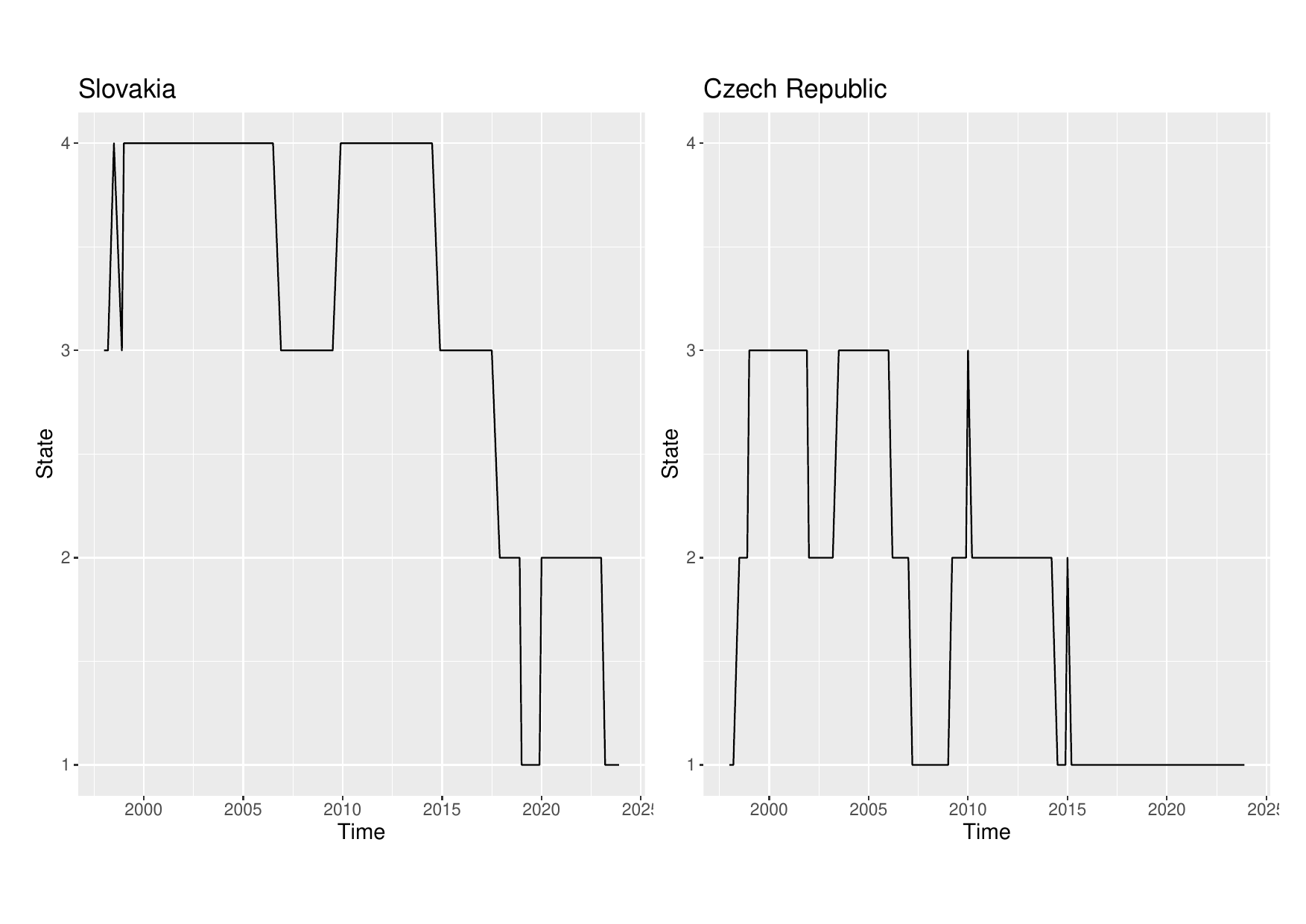}\caption{\label{Fig:Application}Unemployment state per quarter for Slovakia and Czech Republic from 1998 to 2023.}
\end{center}
\end{figure}

\subsection{Model selection \& Estimation}

Based on Kendall's $\tau$, the two time series have correlation $\tau=0.75$ indicating an interesting association. Based on this result we assume that a bivariate model will be more adequate than modelling each series individually. To jointly model the two ordinal time series we use the BDAR($1$) model, which is a plausible choice as the two time series present high autocorrelation at lag $1$, also based on  Kendall's tau, $\tau_1=0.91$ for Slovakia and $\tau_2=0.88$ for Czech Republic. For the joint distribution of random mixtures and the joint distribution of the innovations we will assume Frank copula:
\begin{align}
C_F(u,v;\delta)=-\frac{1}{\delta}\log \left[ 1+ \frac{(\exp(-\delta u)-1)(\exp(-\delta v)-1)}{\exp(-\delta)-1} \right],\quad \delta \in \mathbb{R} \setminus \{0\}  
\label{Equation: Frank copula}
\end{align}
to describe the joint distributions. However, we can also use other copulas to define the joint distributions, while we can also consider different copulas for innovations and random mechanisms. In addition, we assume different special cases of BDAR($1$) with the purpose of finding the most appropriate model. More specifically, we consider the following models:

\begin{itemize}
    \item Model 1: Independent time series 
\begin{align*}
\begin{bmatrix} Z_{1t} \\ Z_{2t}\end{bmatrix}&=\begin{bmatrix} \alpha_{1t} \\ \alpha_{2t}\end{bmatrix} \odot \begin{bmatrix} Z_{1,t-1} \\ Z_{2,t-1}\end{bmatrix}+\begin{bmatrix} 1-\alpha_{1t} \\ 1-\alpha_{2t}\end{bmatrix} \odot \begin{bmatrix} \epsilon_{1t} \\ \epsilon_{2t}\end{bmatrix}\\
\epsilon_{1t} &\sim \boldsymbol{p}^{(1)}_{\epsilon}=( p^{(1)}_{\epsilon_{1}},p^{(1)}_{\epsilon_{2}},p^{(1)}_{\epsilon_{3}},p^{(1)}_{\epsilon_{4}})\\ 
\epsilon_{2t} &\sim\boldsymbol{p}^{(2)}_{\epsilon}=( p^{(2)}_{\epsilon_{1}},p^{(2)}_{\epsilon_{2}},p^{(2)}_{\epsilon_{3}})\\
\alpha_{1t}&\sim Bernoulli(\phi_1)\\
\alpha_{2t}&\sim Bernoulli(\phi_2)
\end{align*}

    \item Model 2: Time series with common random mechanism and dependent innovations
  \begin{align*}  
  \begin{bmatrix} Z_{1t} \\ Z_{2t}\end{bmatrix}&= \begin{bmatrix}\alpha_t \\ \alpha_t \end{bmatrix} \odot \begin{bmatrix} Z_{1,t-1} \\ Z_{2,t-1}\end{bmatrix}+ \begin{bmatrix} 1-\alpha_t\\ 1-\alpha_t \end{bmatrix} \odot \begin{bmatrix} \epsilon_{1t} \\ \epsilon_{2t}\end{bmatrix}\\
\epsilon_{1t} &\sim \boldsymbol{p}^{(1)}_{\epsilon}=( p^{(1)}_{\epsilon_{1}},p^{(1)}_{\epsilon_{2}},p^{(1)}_{\epsilon_{3}},p^{(1)}_{\epsilon_{4}})\\ 
\epsilon_{2t} &\sim\boldsymbol{p}^{(2)}_{\epsilon}=( p^{(2)}_{\epsilon_{1}},p^{(2)}_{\epsilon_{2}},p^{(2)}_{\epsilon_{3}})\\
P(\epsilon_{1t},\epsilon_{2t})&=C_F(F_{\epsilon_{1t}}(\epsilon_{1t}),F_{\epsilon_{2t}}(\epsilon_{2t});\delta_{\epsilon})-C_F(F_{\epsilon_{1t}}(\epsilon_{1t}-1),F_{\epsilon_{2t}}(\epsilon_{2t});\delta_{\epsilon})\\
    &-C_F(F_{\epsilon_{1t}}(\epsilon_{1t}),F_{\epsilon_{2t}}(\epsilon_{2t}-1);\delta_{\epsilon})+
    C_F(F_{\epsilon_{1t}}(\epsilon_{1t}-1),F_{\epsilon_{2t}}(\epsilon_{2t}-1);\delta_{\epsilon})\\
    \alpha_t&\sim Bernoulli(\phi)
\end{align*}

    \item Model 3: Time series with different independent random mechanisms, but dependent innovations
    \begin{align*}
\begin{bmatrix} Z_{1t} \\ Z_{2t}\end{bmatrix}&=\begin{bmatrix} \alpha_{1t} \\ \alpha_{2t}\end{bmatrix} \odot \begin{bmatrix} Z_{1,t-1} \\ Z_{2,t-1}\end{bmatrix}+\begin{bmatrix} 1-\alpha_{1t} \\ 1-\alpha_{2t}\end{bmatrix} \odot \begin{bmatrix} \epsilon_{1t} \\ \epsilon_{2t}\end{bmatrix}\\
\epsilon_{1t} &\sim \boldsymbol{p}^{(1)}_{\epsilon}=( p^{(1)}_{\epsilon_{1}},p^{(1)}_{\epsilon_{2}},p^{(1)}_{\epsilon_{3}},p^{(1)}_{\epsilon_{4}})\\ 
\epsilon_{2t} &\sim\boldsymbol{p}^{(2)}_{\epsilon}=( p^{(2)}_{\epsilon_{1}},p^{(2)}_{\epsilon_{2}},p^{(2)}_{\epsilon_{3}})\\
P(\epsilon_{1t},\epsilon_{2t})&=C_F(F_{\epsilon_{1t}}(\epsilon_{1t}),F_{\epsilon_{2t}}(\epsilon_{2t});\delta_{\epsilon})-C_F(F_{\epsilon_{1t}}(\epsilon_{1t}-1),F_{\epsilon_{2t}}(\epsilon_{2t});\delta_{\epsilon})\\
    &-C_F(F_{\epsilon_{1t}}(\epsilon_{1t}),F_{\epsilon_{2t}}(\epsilon_{2t}-1);\delta_{\epsilon})+
    C_F(F_{\epsilon_{1t}}(\epsilon_{1t}-1),F_{\epsilon_{2t}}(\epsilon_{2t}-1);\delta_{\epsilon})\\ 
\alpha_{1t} &\sim Bernoulli(\phi_1)\\ 
\alpha_{2t} &\sim Bernoulli(\phi_2)
    \end{align*}

    \item Model 4: Time series with different dependent random mechanisms, but independent innovations 
        \begin{align*}
\begin{bmatrix} Z_{1t} \\ Z_{2t}\end{bmatrix}&=\begin{bmatrix} \alpha_{1t} \\ \alpha_{2t}\end{bmatrix} \odot \begin{bmatrix} Z_{1,t-1} \\ Z_{2,t-1}\end{bmatrix}+\begin{bmatrix} 1-\alpha_{1t} \\ 1-\alpha_{2t}\end{bmatrix} \odot \begin{bmatrix} \epsilon_{1t} \\ \epsilon_{2t}\end{bmatrix}\\
\epsilon_{1t} &\sim \boldsymbol{p}^{(1)}_{\epsilon}=( p^{(1)}_{\epsilon_{1}},p^{(1)}_{\epsilon_{2}},p^{(1)}_{\epsilon_{3}},p^{(1)}_{\epsilon_{4}})\\ 
\epsilon_{2t} &\sim\boldsymbol{p}^{(2)}_{\epsilon}=( p^{(2)}_{\epsilon_{1}},p^{(2)}_{\epsilon_{2}},p^{(2)}_{\epsilon_{3}})\\
\alpha_{1t} &\sim Bernoulli(\phi_1)\\ 
\alpha_{2t} &\sim Bernoulli(\phi_2)\\
P(\alpha_{1t},\alpha_{2t})&=C_F(F_{\alpha_{1t}}(\alpha_{1t}),F_{\alpha_{2t}}(\alpha_{2t});\delta_{\alpha})-C_F(F_{\alpha_{1t}}(\alpha_{1t}-1),F_{\alpha_{2t}}(\alpha_{2t});\delta_{\alpha})\\
    &-C_F(F_{\alpha_{1t}}(\alpha_{1t}),F_{\alpha_{2t}}(\alpha_{2t}-1);\delta_{\alpha})+
    C_F(F_{\alpha_{1t}}(\alpha_{1t}-1),F_{\alpha_{2t}}(\alpha_{2t}-1);\delta_{\alpha}) 
    \end{align*}

    \item Model 5: Time series with different dependent random mechanisms and dependent innovations  
\begin{align*}
\begin{bmatrix} Z_{1t} \\ Z_{2t}\end{bmatrix}&=\begin{bmatrix} \alpha_{1t} \\ \alpha_{2t}\end{bmatrix} \odot \begin{bmatrix} Z_{1,t-1} \\ Z_{2,t-1}\end{bmatrix}+\begin{bmatrix} 1-\alpha_{1t} \\ 1-\alpha_{2t}\end{bmatrix} \odot \begin{bmatrix} \epsilon_{1t} \\ \epsilon_{2t}\end{bmatrix}\\
\epsilon_{1t} &\sim \boldsymbol{p}^{(1)}_{\epsilon}=( p^{(1)}_{\epsilon_{1}},p^{(1)}_{\epsilon_{2}},p^{(1)}_{\epsilon_{3}},p^{(1)}_{\epsilon_{4}})\\ 
\epsilon_{2t} &\sim\boldsymbol{p}^{(2)}_{\epsilon}=( p^{(2)}_{\epsilon_{1}},p^{(2)}_{\epsilon_{2}},p^{(2)}_{\epsilon_{3}})\\
P(\epsilon_{1t},\epsilon_{2t})&=C_F(F_{\epsilon_{1t}}(\epsilon_{1t}),F_{\epsilon_{2t}}(\epsilon_{2t});\delta_{\epsilon})-C_F(F_{\epsilon_{1t}}(\epsilon_{1t}-1),F_{\epsilon_{2t}}(\epsilon_{2t});\delta_{\epsilon})\\
    &-C_F(F_{\epsilon_{1t}}(\epsilon_{1t}),F_{\epsilon_{2t}}(\epsilon_{2t}-1);\delta_{\epsilon})+
    C_F(F_{\epsilon_{1t}}(\epsilon_{1t}-1),F_{\epsilon_{2t}}(\epsilon_{2t}-1);\delta_{\epsilon})\\ 
\alpha_{1t} &\sim Bernoulli(\phi_1)\\ 
\alpha_{2t} &\sim Bernoulli(\phi_2)\\
    P(\alpha_{1t},\alpha_{2t})&=C_F(F_{\alpha_{1t}}(\alpha_{1t}),F_{\alpha_{2t}}(\alpha_{2t});\delta_{\alpha})-C_F(F_{\alpha_{1t}}(\alpha_{1t}-1),F_{\alpha_{2t}}(\alpha_{2t});\delta_{\alpha})\\
    &-C_F(F_{\alpha_{1t}}(\alpha_{1t}),F_{\alpha_{2t}}(\alpha_{2t}-1);\delta_{\alpha})+
    C_F(F_{\alpha_{1t}}(\alpha_{1t}-1),F_{\alpha_{2t}}(\alpha_{2t}-1);\delta_{\alpha})  
\end{align*}
  
\end{itemize}
In all cases $C_F(.;\delta)$ denotes the Frank copula described in Equation \ref{Equation: Frank copula}, while $p^{(1)}_{\epsilon_{4}}=1-\sum_{j}^{3}p^{(1)}_{\epsilon_{j}}$ and $p^{(2)}_{\epsilon_{3}}=1-\sum_{j}^{2}p^{(2)}_{\epsilon_{j}}$.

Fitting the above models, the results are presented in Table \ref{Tab. Application Estimates2}. Standard errors are obtained from the Hessian. Based on the estimates of $\phi_1$ and $\phi_2$ of Models 1,3,4 and 5, we can see that random mechanism is similar, thus it is useful trying a more parsimonious model that assumes one common random mechanism for both time series. This is Model 2.

\begin{table}[t]
\caption{\label{Tab. Application Estimates2} Estimates with standard errors, log-likelihood, number of parameters and information criteria under different models}
\centering
\begin{tabular}{cccccc} 
 \hline
          & Model 1 & Model 2  & Model 3 & Model 4 & Model 5 \\  
 \hline
 $\hat{p}^{(1)}_{\epsilon_{1}}$ & $\underset{(0.097)}{0.143}$ & $\underset{(0.089)}{0.133}$ & $\underset{(0.094)}{0.140}$ & $\underset{(0.096)}{0.141}$  & $\underset{(0.082)}{0.123}$ \\
 $\hat{p}^{(1)}_{\epsilon_{2}}$ & $\underset{(0.109)}{0.164 }$ & $\underset{(0.103)}{0.159}$ & $\underset{(0.107)}{0.165}$ & $\underset{(0.106)}{0.158}$ & $\underset{(0.094)}{0.145}$  \\
 $\hat{p}^{(1)}_{\epsilon_{3}}$ & $\underset{(0.136)}{0.270}$ &$\underset{(0.131)}{0.266}$ & $\underset{(0.133)}{0.271}$ & $\underset{(0.138)}{0.276}$   & $\underset{(0.126)}{0.303}$ \\
 $\hat{p}^{(2)}_{\epsilon_{1}}$ & $\underset{(0.141)}{0.316}$ & $\underset{(0.139)}{0.386}$ & $\underset{(0.139)}{0.393}$ & $\underset{(0.142)}{0.319}$  & $\underset{(0.11)}{0.410}$ \\ 
 $\hat{p}^{(2)}_{\epsilon_{2}}$ & $\underset{(0.142)}{ 0.467}$ & $\underset{(0.139)}{0.431}$ & $\underset{(0.138)}{0.428}$ & $\underset{(0.142)}{0.471}$  & $\underset{(0.11)}{0.433}$ \\ 
 $\hat{\phi}_1$                 & $\underset{(0.045)}{0.857}$ & $\underset{(0.036)}{0.830}$ & $\underset{(0.047)}{0.851}$ & $\underset{(0.045)}{0.855}$  & $\underset{(0.051)}{0.823}$ \\ 
 $\hat{\phi}_2$                 & $\underset{(0.048)}{0.824}$ & - & $\underset{(0.052)}{0.809}$ & $\underset{(0.049)}{0.823}$  & $\underset{(0.054)}{0.752}$ \\
 $\hat{\delta}_{\epsilon}$      & $-$                         & $\underset{(0.566)}{26.730}$ & $\underset{(0.416)}{27.335}$ & -                   & $\underset{(0.156)}{28.4}$ \\ 
 $\hat{\delta}_{\alpha}$        & $-$                         & $-$                         & -                 & $\underset{(2.807)}{2.367}$   & $\underset{(0.687)}{24.719}$ \\ 
 \hline
 Log-lik.  &  -87.65 & -85.86 & -85.67 & -87.20 & -82.22 \\
\hline 
\verb|#|parameters & 7 & 7 & 8 & 8 & 9\\
 \hline
 BIC & 207.75 & \textbf{204.15} & 208.42 & 211.47 & 206.16 \\
 \hline
 AIC & 189.31 & 185.71 & 187.34 & 190.40 & \textbf{182.44}\\
 \hline
\end{tabular}

\end{table}

To decide which is the optimal model we use information criteria (BIC, AIC) because not all models are nested. According to BIC the best model is Model 2 while based on AIC the best models is Model 5. The two models are nested, thus we can compare them based on Likelihood Ratio Test (LRT). The results showed that there is statistically significant difference between the two models (LRT: pvalue$=0.026 < 5\%$). Thus, we finally choose Model 5:
\begin{align*}
\begin{bmatrix} Z_{1t} \\ Z_{2t}\end{bmatrix}&=\begin{bmatrix} \alpha_{1t} \\ \alpha_{2t}\end{bmatrix} \odot \begin{bmatrix} Z_{1,t-1} \\ Z_{2,t-1}\end{bmatrix}+\begin{bmatrix} 1-\alpha_{1t} \\ 1-\alpha_{2t}\end{bmatrix} \odot \begin{bmatrix} \epsilon_{1t} \\ \epsilon_{2t}\end{bmatrix}\\
\epsilon_{1t} &\sim \hat{\boldsymbol{p}}^{(1)}_{\epsilon}=( \hat{p}^{(1)}_{\epsilon_{1}}=0.123,\hat{p}^{(1)}_{\epsilon_{2}}=0.145,\hat{p}^{(1)}_{\epsilon_{3}}=0.303,\hat{p}^{(1)}_{\epsilon_{4}}=0.429)\\ 
\epsilon_{2t} &\sim\hat{\boldsymbol{p}}^{(2)}_{\epsilon}=( \hat{p}^{(2)}_{\epsilon_{1}}=0.410,\hat{p}^{(2)}_{\epsilon_{2}}=0.433,\hat{p}^{(2)}_{\epsilon_{3}}=0.157)\\
\alpha_{1t} &\sim \text{Bernoulli}(\hat{\phi}_1=0.823)\\ 
\alpha_{2t} &\sim \text{Bernoulli}(\hat{\phi}_2=0.752) 
\end{align*}
where the joint distribution of $\boldsymbol{\alpha}_t$ is given by a Frank Copula with $\hat{\delta}_{\alpha}=28.4$ and
the joint distribution of $\boldsymbol{\epsilon}_t$ is also given by a Frank Copula with $\hat{\delta}_{\epsilon}=24.719$. The joint pmfs are presented in Figure \ref{Fig:Joint pmf}.

\begin{figure}[t]
\begin{center}
\includegraphics[scale=0.35]{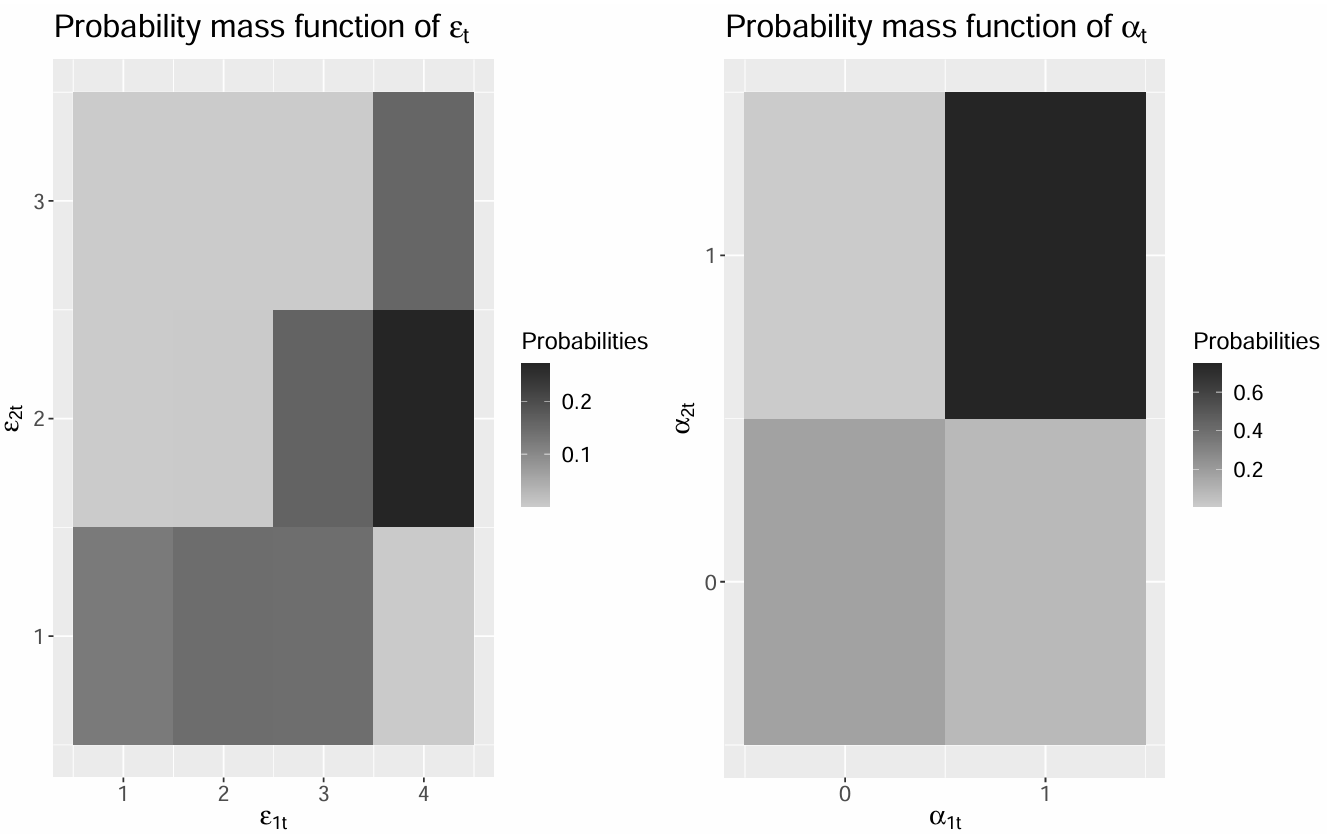}\caption{\label{Fig:Joint pmf} Estimated joint probability mass functions of innovation terms and random mechanisms}
\end{center}
\end{figure}

As we expect from the high values of autocorrelation, the probability of choosing the previous state $(\phi_1,\phi_2)$ for the current value is very high for both time series, indicating that the series are persistent to the same state. This can also be seen in Figure \ref{Fig:Application}, where we can see that there are long periods for which both time series, especially Slovakia, that presents higher probability, don't change state. In addition, we can conclude that Slovakia has bigger problem of unemployment than Czechia, while it has higher probabilities in the highest state. On the other hand, compared to Slovakia, Czechia presents highest probabilities in the two lowest unemployment states.

\subsection{Forecasting}
Assume the case that we would like to use the model for forecasting the next states of unemployment per each quarter based on the model we have chosen. More specifically, we assume that for both countries we have available only the unemployment states for the period $1998-2020$ ($t=1,\ldots,92$), and we would like to predict the unemployment state for the period $2021-2023$. Based on data until $2020$ we fit the chosen model. Then, to forecast $\boldsymbol{Z}_{t+h}$ for $h=1,\ldots,12$, first we define the marginal probabilities of each time series, conditional to the their previous values:  
\begin{align*}
 P(Z_{1,t+h}=i|\boldsymbol{Z}_{t+h-1})=\sum_{j=1}^{3}P(Z_{1,t+h}=i,Z_{2,t+h}=j|\boldsymbol{Z}_{t+h-1})  \\
 P(Z_{2,t+h}=i|\boldsymbol{Z}_{t+h-1})=\sum_{i=1}^{4}P(Z_{1,t+h}=i,Z_{2,t+h}=j|\boldsymbol{Z}_{t+h-1}).  
\end{align*}
Then, the forecast $\hat{Z}_{1,t+h}$ will occur as a random draw from the distribution $Z_{1,t+h}|\boldsymbol{Z}_{t+h-1}$. For $h=1$, $\boldsymbol{Z}_{t+h-1}$ are the final observed values of the time series ($\boldsymbol{Z}_{92}$), while for $h>1$, we use the last forecasts of the series $\boldsymbol{Z}_{t+h-1}$=$\hat{\boldsymbol{Z}}_{t+h-1}$. We repeat this procedure for $B=10000$ times. As final forecast at $h-$step we consider the most frequent state of the $B=10000$ repetitions. We follow the same procedure for forecasting the other series. 
We also consider the joint distribution of $Z_{1,t+h}$ and $Z_{2,t+h}$ for each $h=1,\ldots,12$, based on the 10000 simulations. The joint probability mass function is presented in Figure \ref{Fig:Forc_joint}. 

The forecasted values are presented in Tables \ref{Tab. Probs} and \ref{Tab. Forecasts}. Table \ref{Tab. Probs}
presents the marginal relative frequencies for the two forecasts based on the simulations, while Table \ref{Tab. Forecasts} shows the modal forecasts for each margin separately but also jointly. 

\begin{table}[t]
\caption{\label{Tab. Probs} Relative frequency table of states based $10000$ simulations for each time series }
\centering
\begin{tabular}{|c|cccc|ccc|} 
\hline 
  & & $Z_{1,t+h}$ & &  & & $Z_{2,t+h}$  & \\
 \hline
    $h$ & 1 & 2 & 3 & 4 & 1 & 2 & 3  \\           
 \hline
1& 0.011 & 0.830 & 0.064 & 0.095 & 0.829 & 0.125 & 0.046\\
2& 0.022 & 0.696 & 0.114 & 0.167 & 0.703 & 0.216 & 0.082\\
3& 0.030 & 0.585 & 0.155 & 0.231 & 0.612 & 0.279 & 0.109\\
4& 0.037 & 0.500 & 0.190 & 0.273 & 0.544 & 0.329 & 0.127\\
5& 0.043 & 0.426 & 0.221 & 0.309 & 0.492 & 0.370 & 0.138\\
6& 0.046 & 0.369 & 0.240 & 0.345 & 0.460 & 0.396 & 0.144\\
7& 0.051 & 0.322 & 0.256 & 0.371 & 0.435 & 0.418 & 0.147 \\
8& 0.053 & 0.284 & 0.269 & 0.394 & 0.416 & 0.435 & 0.150 \\
9  & 0.057 & 0.255 & 0.280 & 0.408 & 0.399 & 0.443  & 0.158\\
10 & 0.057 & 0.232 & 0.291 & 0.420 & 0.392 & 0.444 & 0.164\\
11 & 0.058 & 0.210 & 0.302 & 0.430 & 0.386 & 0.448 & 0.166\\
12 & 0.062 & 0.193 & 0.307 & 0.438 & 0.382 & 0.455 &0.163\\
 \hline
\end{tabular}

\end{table}

\begin{table}[t]
\caption{\label{Tab. Forecasts} Forecasts based on $10000$ simulations and true values of the time series}
\centering
\begin{tabular}{|c|cc|cc|c|} 
 \hline
    $h$ &  $\hat{Z}_{1,t+h}$\rule{0pt}{3ex}  & True value &  $\hat{Z}_{2,t+h}$  & True value & ($\hat{Z}_{1,t+h}$,$\hat{Z}_{2,t+h}$)\\           
 \hline
1 & 2 & 2& 1 & 1 &  (2,1)\\
2 & 2 & 2& 1 & 1 &  (2,1)\\
3 & 2 & 2& 1 & 1 &  (2,1)\\
4 & 2 & 2& 1 & 1 &  (2,1)\\
5 & 2 & 2& 1 & 1 &  (2,1)\\
6 & 2 & 2& 1 & 1 &  (2,1)\\
7 & 4 & 2& 1 & 1 & (4,2)\\
8 & 4 & 2& 2 & 1 & (4,2)\\
9  & 4 & 2& 2 & 1 & (4,2)\\
10 & 4 & 1& 2 & 1 & (4,2)\\
11 & 4 & 1& 2 & 1 & (4,2)\\
12 & 4 & 1& 2 & 1 & (4,2)\\
 \hline
\end{tabular}
\end{table}

\begin{figure}[t]
\begin{center}
\includegraphics[scale=0.35]{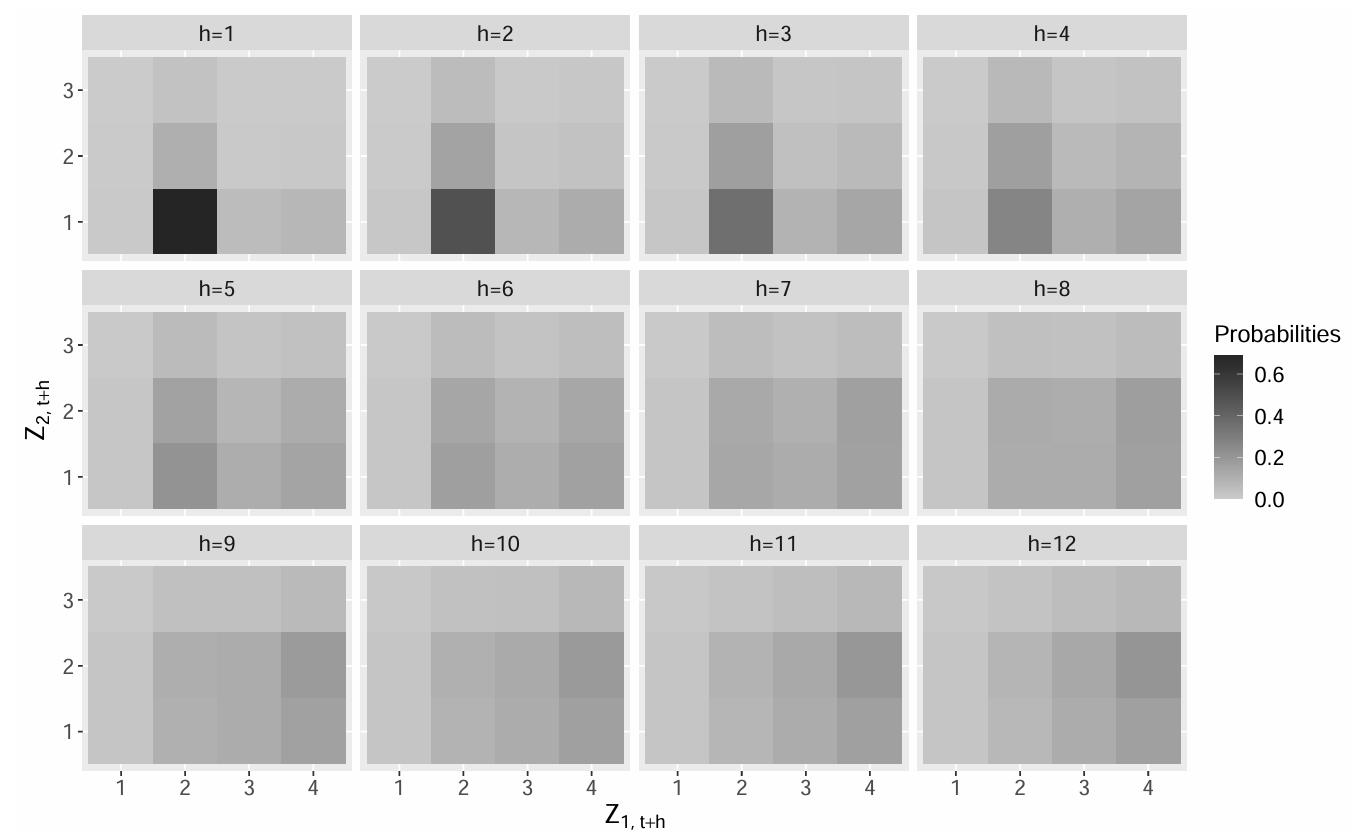}\caption{\label{Fig:Forc_joint} Joint probability mass function of the two series for each step ahead based on 10000 simulations}
\end{center}
\end{figure}

\section{Conclusion}
\label{Sec7}
 The paper introduces the BDAR(1) model, an extension of the well-known DAR(1) model, designed for bivariate discrete-valued time series such as integer, ordinal and binary data. The bivariate DAR(1) model must capture two types of dependence: the serial correlation within each individual series and the cross-correlation between the two series. Serial correlation is modeled through a Bernoulli variable for each series, which determines whether the current state is inherited from the previous state or drawn from an innovation term. Dependence between the two processes is incorporated via both the Bernoulli vectors and the innovation terms. Specifically, the model assumes that not only the random mixtures of the two series are correlated, but also their innovations. To capture this, the joint distributions are specified using copulas, which allow for flexible modeling of joint distributions of discrete variables and can represent a wide range of dependence structures. Estimation of the model is carried out by maximizing the conditional log-likelihood, and the paper also discusses the basic properties of the model.

 Special attention is given to the case of two ordinal time series, as the existing literature on bivariate models for this setting is quite limited. To address this gap, a simulation study is conducted to evaluate the model’s performance across different sample sizes. The results indicate that the model requires at least a moderate sample size to produce robust estimates. Furthermore, the proposed methodology was applied to jointly model and forecast the unemployment states of Slovakia and the Czech Republic.

The introduction of the BDAR($1$) model opens up several promising directions for further research. First, the proposed methodology accounts only for serial dependence of lag $1$. However, in many applications, serial dependence at higher lags is also relevant. Incorporating additional lagged values would require representing the random mechanisms with multinomial rather than Bernoulli vectors, thereby increasing the number of possible outcomes. This, in turn, would result in a more complex model with a larger set of parameters. Another possible extension involves moving to higher dimensions, as there may be cases where a joint model for more than two time series is needed.

Since the BDAR($1$) model is an extension of the DAR($1$) model, it inherits some of its properties. One such property is that the model captures only non-negative serial dependence. In the special case of binary time series, defining and modeling negative serial dependence is relatively straightforward \citep{jentsch2019generalized}. However, for ordinal time series, this task is far from trivial. This challenge persists even in the univariate case.

The proposed model is sufficiently general to be applied not only to ordinal time series but also to count and binary time series. We also expect that continuous time series could be incorporated, provided an extension similar to that of \cite{moller2020generalized} is adopted. Such an extension could also help address the issue of long runs of repeating values, as previously discussed. Moreover, since copulas make it possible to define the joint distribution for data of any type, the proposed model could be employed for time series of different modalities. Finally, in the case of mixed data—or even for time series of the same type but with different ranges—the inclusion of cross-correlation terms poses a non-trivial challenge, which represents another interesting avenue for future research.

\section*{Acknowledgment}
The research work was supported by the Hellenic Foundation 
for Research and Innovation (HFRI) under the 5th Call for 
HFRI PhD Fellowships (Fellowship Number: 20535.)

\bibliographystyle{apalike}
\bibliography{biblio}
\end{document}